\documentclass{article}

\PassOptionsToPackage{numbers,compress}{natbib}

\usepackage[preprint]{neurips_2026}

\usepackage[utf8]{inputenc} %
\usepackage[T1]{fontenc}    %
\usepackage{hyperref}       %
\usepackage{url}            %
\usepackage{booktabs}       %
\usepackage{amsfonts}       %
\usepackage{nicefrac}       %
\usepackage{microtype}      %
\usepackage{xcolor}         %

\usepackage{amsmath}
\usepackage{amssymb}
\usepackage{mathtools}
\usepackage{amsthm}
\usepackage{algorithm}
\usepackage{algorithmic}

\usepackage{graphicx}
\usepackage{subcaption}
\usepackage{bm}
\usepackage{multirow}

\usepackage{adjustbox}

\theoremstyle{plain}
\newtheorem{theorem}{Theorem}[section]
\newtheorem{proposition}[theorem]{Proposition}
\newtheorem{lemma}[theorem]{Lemma}

\theoremstyle{definition}

\theoremstyle{remark}
\newtheorem{remark}[theorem]{Remark}

\title{Quantum End-to-End Learning for Contextual Combinatorial Optimization}

\author{%
    Jaehwan Lee$^{1}$, Changhyun Kwon$^{1,2}$\\
    $^{1}$KAIST\quad$^{2}$Omelet\\
}

\begin{document}

\maketitle

\begin{abstract}
 Contextual combinatorial optimization (CCO) plays a critical role in decision-making under uncertainty, yet remains a significant challenge.
 We present Quantum End-to-End Learning (QEL), the first quantum computing-based end-to-end learning framework for CCO that leverages Quantum Approximate Optimization Algorithms.
 Inspired by the integration of state preparation and evolution in data re-uploading, we propose a context re-uploading phase-separator that jointly captures the complex relations among contexts, uncertain coefficients, and optimal solutions.
 This allows a contextual encoder to be seamlessly integrated within a quantum surrogate policy, enabling joint end-to-end training with a stationarity guarantee.
 Exploiting an optimization-aware structure grounded in physical principles that classical methods cannot readily leverage, our approach demonstrates practicality by directly training on task loss despite the discreteness and nonconvexity, while avoiding calls to NP-hard optimization solvers.
 QEL empirically achieves competitive performance while requiring substantially fewer parameters than classical benchmarks, highlighting its industrial-level potential for the future quantum era.
\end{abstract}

\section{Introduction}
From energy-efficient supply chain to portfolio management, numerous real-world decision-making problems require accounting for uncertainty and discreteness, leading to the need to find optimal policies given their combinatorial nature \cite{hentenryck2006online, juan2015review}.
With the rapid advances in machine learning (ML) techniques, decision-makers can exploit contextual information within data by quantifying uncertainties inherent in problem coefficients.
To this end, a \emph{contextual optimization} framework has been widely developed within the paradigm of integrating ML and operations research \cite{lepenioti2020prescriptive, sadana2025survey}.
We provide a mathematical formulation for contextual optimization in Appendix \ref{apd:contextual}.
Specifically, \emph{end-to-end learning} \cite{donti2017task} has gained significant attention as it trains ML models to directly minimize the \emph{task loss}, the downstream costs incurred by the decision \cite{sadana2025survey}.
This class of approaches is categorized into \emph{decision rule} (DR) and \emph{Predict-and-Optimize} (PnO).

In DR \citep{ban2019big}, the optimal policy is directly approximated by training an ML model that solves the empirical risk minimization (ERM) problem without the requirement of true optimal solutions:
\begin{equation}
    \label{eq:erm}
    \min_{\bm{w} \in \mathbb{R}^{d_{\bm{w}}} }\frac{1}{N}\sum_{l=1}^N C\bigl(\pi_{\bm{w}}(\bm{x}^{(l)}), \bm{y}^{(l)}\bigr),
\end{equation}
where $\mathcal{D}=\{ (\bm{x}^{(l)},\bm{y}^{(l)}) \}^N_{l=1}$ denotes realized training data of contexts $\bm{x}^{(l)}\in\mathcal{X}$ and uncertain coefficients $\bm{y}^{(l)}\in\mathcal{Y}$, and $\pi_{\bm{w}}$ represents a parametrized surrogate policy.
However, due to its black-box, solver-free nature, it suffers from limited interpretability and a lack of principled guidelines for selecting appropriate models.
On the other hand, PnO approaches---also called decision-focused learning \cite{mandi2024decision}---explicitly separate prediction and optimization layers, while training the model in the prediction layer end-to-end via the task loss through the optimization layer.
The training in the PnO framework can be represented as
\begin{equation}
    \label{eq:pao}
    \min_{\bm{w} \in \mathbb{R}^{d_{\bm{w}}} }\frac{1}{N}\sum_{l=1}^N C\bigl(a^*(\bm{x}^{(l)}, h_{\bm{w}}), \bm{y}^{(l)}\bigr),
\end{equation}
where $a^*(\bm{x}^{(l)},h_{\bm{w}})$ represents the minimizer of the expected cost given $\bm{x}^{(l)}$ with respect to the contextual predictor $h_{\bm{w}}$.
The key challenge in PnO is to preserve differentiability through the optimization layer back to the predictive ML model.
Prior studies have employed roundabout techniques, such as implicit differentiation \cite{amos2017optnet} for convex quadratic programs and regret-based surrogate loss \citep{elmachtoub2022smart} for linear programs.

Nevertheless, end-to-end learning for contextual combinatorial optimization (CCO) is severely hindered by the \emph{discreteness} of decision variables, which disrupts gradient calculation \citep{geng2024benchmarking}.
Prior work has explored continuous relaxations of the optimization layer \citep{agrawal2019differentiable, wilder2019melding, mandi2020smart}, but these methods are restricted to linear or convex objectives.
Recent studies in PnO frameworks have attempted to extend to general objective settings by employing learning-based surrogate costs \citep{shah2022decision, ferber2023surco} or learning-to-rank surrogate losses \citep{mandi2022decision}.
However, these approaches circumvent direct optimization of the task loss by employing surrogate formulations and require supervision over a precomputed or iteratively updated optimal-solution cache for NP-hard problems, precluding on-the-fly training.
In addition, the PnO framework strictly requires iterative, NP-hard optimization based on predictions, thereby obstructing practical industrial-level utilization with huge-sized problems.

In this work, to overcome these challenges in CCO, we present a novel end-to-end learning algorithm called \textit{Quantum End-to-End Learning} (QEL), based on quantum computing (QC), that can minimize task loss directly by gradient-based training.
Despite the discreteness and nonconvexity of the problem, QEL can directly solve linear and quadratic CCO problems and can be extended to higher-order polynomial problems, even without access to the true optimal solutions or continuous relaxation.

To the best of our knowledge, no prior work has attempted to solve CCO problems using QC, although research has explored the potential of quantum computing to achieve computational speedups over classical approaches in specific combinatorial optimization \cite{shaydulin2024evidence} and machine learning \cite{liu2021rigorous} tasks.
This is due to a disconnect between state preparation for quantum combinatorial optimization algorithms and fixed data encoding used in quantum machine learning, which has precluded leveraging QC for data-driven, context-aware combinatorial decision-making.

To overcome, we integrate two quantum paradigms: Quantum Approximate Optimization Algorithms (QAOA) \citep{farhi2014quantum} and data re-uploading \cite{perez2020data}, thereby harnessing the advantages of both within a unified framework.
Specifically, the theoretical foundations of QAOA permit training QEL to minimize empirical risk directly.
Building upon data re-uploading that unifies data encoding and parametrized computation, in QEL, the contextual encoder is embedded within the surrogate policy, enabling context-aware decision-making within the integration of quantum machine learning and quantum optimization.
We provide the convergence-to-stationarity guarantee of policy-encoder joint learning, demonstrating its end-to-end nature.
In addition, compared to classical approaches, it enables an optimization-aware structure with a high inductive bias, improving interpretability, eliminating the need for optimization solvers at test time, and substantially reducing the number of parameters.

\paragraph{Contribution}
Our contributions are summarized below:
\begin{itemize}
    \item We introduce the first quantum computing framework for end-to-end learning to solve contextual combinatorial optimization by integrating QAOA with data re-uploading, enabling context-aware decision-making that was previously unattainable in QC.
    \item We establish the theoretical connection between the quantum loss function derived from the variational principle and the empirical risk minimization in contextual optimization, enabling training directly on task loss, even with the discreteness of the decision variables.
    \item Our framework integrates the contextual encoder directly into a principled quantum surrogate policy, thereby strengthening it with advantageous inductive bias to improve interpretability and reduce the dimensionality of classical control spaces.
    \item We empirically validate the competitiveness of our approach, achieving decision qualities comparable to classical benchmarks.
\end{itemize}

\section{Preliminaries} \label{sec:preliminaries}

This section introduces two principles in quantum mechanics that underpin quantum-classical hybrid algorithms and quantum combinatorial optimization, providing the theoretical background of QEL for CCO.

\subsection{The Variational Principle}
The variational principle provides theoretical justification for quantum-classical hybrid algorithms, known as Variational Quantum Algorithms (VQA) \cite{mcclean2016theory, cerezo2021variational}, which use classical optimizers to train a parametrized quantum circuit (PQC).
The PQC consists of a sequence of quantum gates with tunable rotation angles, enabling the preparation of quantum states parametrized by classical variables.
The hybrid optimization loop operates by executing the PQC on a quantum processor to estimate expectation values, typically constituting the computational bottleneck, and then using a classical optimizer to update the circuit parameters iteratively, as illustrated in Figure \ref{fig:pqc}.
\begin{figure}
    \centering
    \includegraphics[scale=0.275]{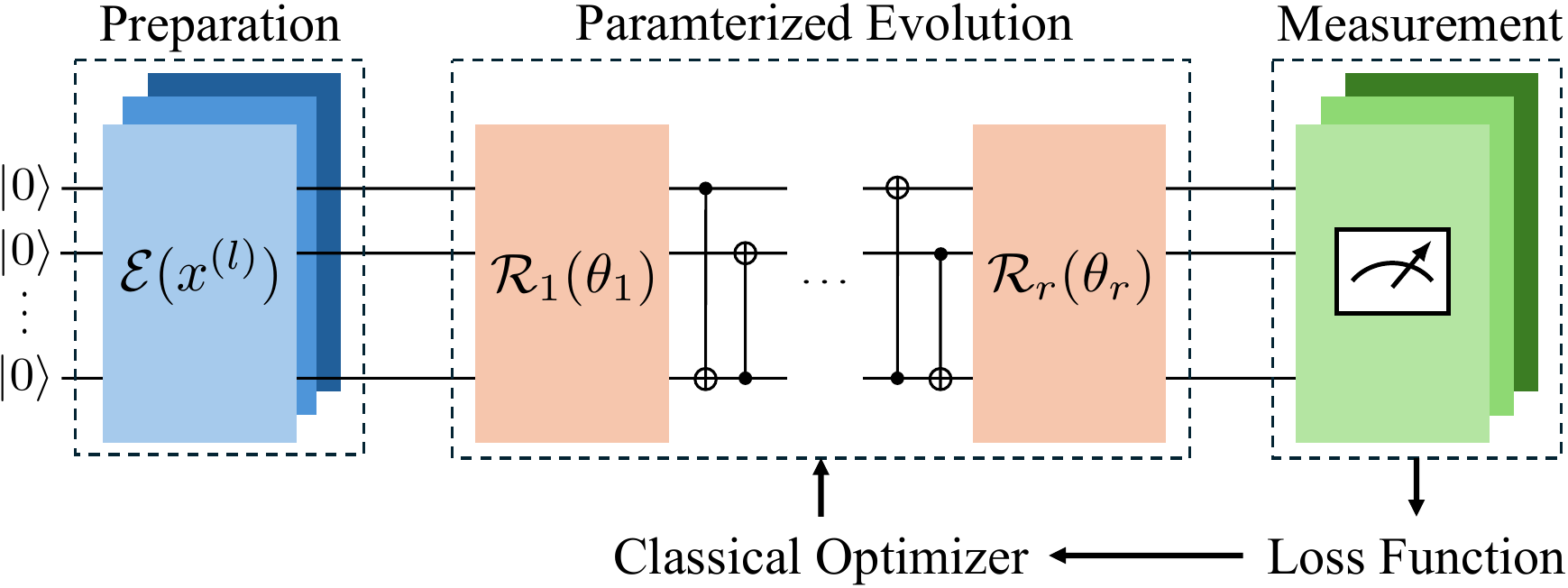}
    \caption{Schematic of a parametrized quantum circuit (PQC) and the hybrid algorithm loop. $\mathcal{E}$ denotes the data encoding unitary, and $\mathcal{R}_1,\cdots,\mathcal{R}_r$ denotes parametrized rotation unitaries.}
    \label{fig:pqc}
\end{figure}
The variational principle provides a \emph{lower bound} on the expected value of a particular Hamiltonian $\mathcal{H}$ with respect to a trial wave function called the \emph{ansatz} $|\psi\rangle$.
\begin{theorem}[The Variational Principle, \citealt{griffiths2018introduction}]
  \label{thm:variational}
  Let $\mathcal{H}$ be a Hamiltonian with ground state energy, i.e., the smallest eigenvalue, $E_0$.
  For any normalized ansatz $|\psi\rangle$,
  \begin{equation}
    \label{eq:variational}
    E_0 \le \langle\psi|\mathcal{H}|\psi\rangle,
  \end{equation}
  where $\langle\psi|\mathcal{H}|\psi\rangle$ denotes the expected value of the Hamiltonian $\mathcal{H}$ with respect to the ansatz $|\psi\rangle$.
\end{theorem}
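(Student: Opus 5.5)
The plan is to prove the bound by expanding the ansatz in the eigenbasis of $\mathcal{H}$. Since $\mathcal{H}$ is a Hamiltonian, it is Hermitian. In the setting of this paper it acts on a finite-dimensional $n$-qubit Hilbert space of dimension $2^n$. The finite-dimensional spectral theorem therefore gives an orthonormal eigenbasis $\{|\phi_k\rangle\}$ with real eigenvalues $E_k$, and we order them so that $E_0 \le E_k$ for all $k$. Here $E_0$ is by definition the smallest eigenvalue, as in the statement of Theorem~\ref{thm:variational}.

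The first step is to write an arbitrary normalized ansatz as
\[
|\psi\rangle = \sum_k c_k |\phi_k\rangle, \qquad c_k = \langle \phi_k|\psi\rangle .
\]
By orthonormality, the normalization $\langle\psi|\psi\rangle = 1$ becomes $\sum_k |c_k|^2 = 1$.

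The second step is to compute the expectation. Using $\mathcal{H}|\phi_k\rangle = E_k|\phi_k\rangle$ and orthonormality again,
\[
\langle\psi|\mathcal{H}|\psi\rangle = \sum_{j,k}\overline{c_j}\,c_k E_k\langle\phi_j|\phi_k\rangle = \sum_k |c_k|^2 E_k .
\]
This is a convex combination of the eigenvalues, with weights $|c_k|^2 \ge 0$ summing to one.

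The final step bounds each term from below: $\sum_k |c_k|^2 E_k \ge \sum_k |c_k|^2 E_0 = E_0$. Equality holds exactly when $|\psi\rangle$ lies in the ground eigenspace, which is worth remarking because QAOA aims to approach that subspace.

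The only potentially delicate point is the existence of the eigenbasis. In infinite dimensions one would need the spectral theorem for self-adjoint operators, with $E_0 = \inf \operatorname{spec}(\mathcal{H})$ and a spectral-measure version of the same convex-combination argument. For the qubit Hamiltonians used later in the paper, finite dimensionality makes this immediate, so I do not expect any real obstacle.
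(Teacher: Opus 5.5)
Your proof is correct and is the standard argument. The paper gives no proof of Theorem~\ref{thm:variational} and instead cites \citet{griffiths2018introduction}, whose proof is this same expansion of $|\psi\rangle$ in the eigenbasis of $\mathcal{H}$, giving $\langle\psi|\mathcal{H}|\psi\rangle=\sum_k |c_k|^2E_k\ge E_0$; your finite-dimensional qubit setting is exactly the case the paper later relies on, for the bound $\ell_l(\bm{\phi})\ge\lambda_{\min}(\mathcal{H}_F(\bm{y}^{(l)}))$ in Appendix~\ref{app:stationary_convergence_qel}.
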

All qubit states are normalized, and the ansatz can be implemented as a fixed PQC with rotation angle parameters $\bm{\theta}$, represented as $|\psi(\bm{\theta})\rangle$.
The training objective of VQA is to find $\bm{\theta}$ that minimizes $\langle\psi(\bm\theta)|\mathcal{H}|\psi(\bm\theta)\rangle$.
This process is performed via iterative updates with classical optimizers.

\subsection{The Adiabatic Theorem}
The adiabatic theorem serves as a foundation for solving combinatorial optimization problems in QC.
\begin{theorem}[The Adiabatic Theorem, \citealt{farhi2000quantum}]
  \label{thm:adiabatic}
  A quantum system initialized in the ground state of a time-dependent Hamiltonian will remain in the instantaneous ground state if the Hamiltonian evolves sufficiently slowly.
\end{theorem}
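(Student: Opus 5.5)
To make Theorem \ref{thm:adiabatic} precise, I would first fix a smooth family of Hamiltonians $\mathcal{H}(s)$, $s\in[0,1]$, and evolve for total time $T$ with $\mathcal{H}(t/T)$. I assume the instantaneous ground state $|\phi_0(s)\rangle$ is non-degenerate, separated from the rest of the spectrum by a gap $\Delta(s)\ge\Delta_{\min}>0$. I also assume $\mathcal{H}$ is twice continuously differentiable in $s$, and that the system starts in $|\phi_0(0)\rangle$. In the rescaled time $s$, the Schr\"odinger equation reads $i\,\partial_s|\psi(s)\rangle = T\,\mathcal{H}(s)|\psi(s)\rangle$. The precise claim I would prove is
\begin{equation*}
\bigl\|(1-P(s))|\psi(s)\rangle\bigr\| \le \frac{C}{T}\left(\frac{\|\dot{\mathcal{H}}(0)\|}{\Delta(0)^2}+\frac{\|\dot{\mathcal{H}}(s)\|}{\Delta(s)^2}+\int_0^s\Bigl(\frac{\|\ddot{\mathcal{H}}\|}{\Delta^2}+\frac{\|\dot{\mathcal{H}}\|^2}{\Delta^3}\Bigr)d\sigma\right),
\end{equation*}
where $P(s)=|\phi_0(s)\rangle\langle\phi_0(s)|$. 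Thus "sufficiently slowly" means $T\gg \max\|\dot{\mathcal{H}}\|/\Delta_{\min}^2$, up to the integral term.

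The key steps are as follows. First, the spectral projection $P(s)$ is $C^2$ in $s$; this follows from the Riesz formula $P(s)=\frac{1}{2\pi i}\oint_\Gamma (z-\mathcal{H}(s))^{-1}dz$, taken over a contour enclosing only $E_0(s)$, and it gives $\|\dot P\|\lesssim\|\dot{\mathcal{H}}\|/\Delta$. Second, I introduce Kato's adiabatic generator $\mathcal{H}_A(s)=T\mathcal{H}(s)+i[\dot P,P]$. Its evolution $U_A$ intertwines exactly, $U_A(s)P(0)=P(s)U_A(s)$, which one checks by differentiating $U_A^\dagger P U_A$ and using $P\dot P P=0$. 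Third, I compare the true evolution $U$ with $U_A$ through the Duhamel formula
\[
U(s)-U_A(s)=-U(s)\int_0^s U^\dagger(\sigma)\,[\dot P,P](\sigma)\,U_A(\sigma)\,d\sigma .
\]

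The main obstacle is showing that this integral is $O(1/T)$ rather than $O(1)$. My plan is to write $[\dot P,P]$ as a commutator with $\mathcal{H}$: set $X=\frac{1}{2\pi i}\oint_\Gamma R(z)\dot P R(z)\,dz$, where $R(z)=(z-\mathcal{H})^{-1}$, so that $[\mathcal{H},X]=[\dot P,P]$ and $\|X\|\lesssim\|\dot{\mathcal{H}}\|/\Delta^2$. Then $U^\dagger[\mathcal{H},X]U_A$ is, up to $O(1)$ corrections, $\frac{1}{iT}\frac{d}{d\sigma}\bigl(U^\dagger X U_A\bigr)$. Integrating by parts produces the boundary terms and an integral involving $\dot X$, which yields the $\ddot{\mathcal{H}}/\Delta^2$ and $\dot{\mathcal{H}}^2/\Delta^3$ terms, all carrying the prefactor $1/T$.

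Finally, I apply the bound to $|\psi(0)\rangle=|\phi_0(0)\rangle$. Since $U_A$ maps it into the range of $P(s)$, the leakage out of the instantaneous ground state is bounded by $\|U-U_A\|=O(1/T)$, which goes to $0$ as $T\to\infty$. This is exactly the statement that the system remains in the instantaneous ground state when the evolution is slow enough.
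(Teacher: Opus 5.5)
The paper gives no proof of this statement. It quotes the adiabatic theorem informally from Farhi et al.\ as background for reading QAOA as a Trotterized adiabatic evolution, and ``sufficiently slowly'' is never quantified. Your proposal is the standard rigorous route, Kato's adiabatic generator followed by the Jansen--Ruskai--Seiler integration-by-parts estimate, and it goes through:
\begin{itemize}
\item $\partial_s(U^\dagger U_A)=U^\dagger[\dot P,P]U_A$ gives your Duhamel formula;
\item $[[\dot P,P],P]=\dot P$ gives the intertwining;
\item $(1-P(s))U_A(s)P(0)=0$ reduces the leakage to $\|U(s)-U_A(s)\|$.
\end{itemize}

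One sign needs fixing. With $R(z)=(z-\mathcal{H})^{-1}$ you have $[\mathcal{H},R\dot P R]=R\dot P-\dot P R$ and $\frac{1}{2\pi i}\oint_\Gamma R\,dz=P$. So your $X$ satisfies $[\mathcal{H},X]=P\dot P-\dot P P=-[\dot P,P]$. Use $-X$ instead; this changes no norms.

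After this fix, $\partial_\sigma(U^\dagger X U_A)=iT\,U^\dagger[\dot P,P]U_A+U^\dagger(\dot X+X[\dot P,P])U_A$. So the remainder in the integration by parts carries the same $1/T$ as the boundary terms. Your ``$O(1)$ corrections'' should be read inside the derivative. The $X[\dot P,P]$ piece also feeds the $\|\dot{\mathcal{H}}\|^2/\Delta^3$ term.

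Compared with the paper's bare citation, your version does three things.
\begin{itemize}
\item It makes explicit the hypotheses the informal statement hides: a uniform gap, and, for \eqref{eq:tdhamiltonian}, a schedule $f(t)=\tilde f(t/T)$ with a fixed $C^2$ profile $\tilde f$, so that $\|\dot{\mathcal{H}}\|=|\tilde f'|\,\|\mathcal{H}_F-\mathcal{H}_I\|$.
\item It recovers Farhi et al.'s heuristic criterion $T\gg\max\|\dot{\mathcal{H}}\|/\Delta_{\min}^2$, up to the $\Delta^{-3}$ integral term. This shows that the required $T$ is governed by the minimum gap, which is the mechanism behind the paper's remark that $T$ must grow exponentially in $n$ on hard instances.
\item It exposes a caveat for the paper's own use. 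The MaxCut Hamiltonian has only $Z_iZ_j$ terms, so its ground state is at least doubly degenerate under the global flip $\prod_i X_i$. Your non-degeneracy and uniform-gap assumptions therefore fail at $s=1$ on the full space. The argument still applies verbatim after restricting to the $\prod_i X_i=+1$ sector, provided the gap is measured within that sector. That sector contains $|\bm{s}\rangle$ and is preserved by both $\mathcal{H}_I$ and $\mathcal{H}_F$.
\end{itemize}
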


It can be leveraged for quantum optimization by constructing the time-dependent Hamiltonian as
\begin{equation}
    \label{eq:tdhamiltonian}
    \mathcal{H}(t)=(1-f(t))\mathcal{H}_I+f(t)\mathcal{H}_F,
\end{equation}
where $f(t)$ is a smooth and monotonic function of $t\in[0, T]$ satisfying $f(0)=0$ and $f(T)=1$, $\mathcal{H}_I$ is an initial Hamiltonian whose ground state is easily preparable, and $\mathcal{H}_F$ is a final Hamiltonian we want to solve.
Typically, a transverse-field Hamiltonian and its ground state $|\bm{s}\rangle$ is prepared as
\begin{equation}
    \label{eq:initial}
    \mathcal{H}_I=-\sum_{i=1}^{n}X_i,\quad|\bm{s}\rangle=\frac{1}{\sqrt{2^n}}\sum_{\bm{z}}|\bm{z}\rangle,
\end{equation}
where $n$ is the total number of qubits, $X_i$ is the Pauli-$X$ operator acting on the $i$th qubit, and $|\bm{z}\rangle$ is one of the computational basis states that encodes an $n$-bit string $\bm{z}$, to be initialized.
Thus, the smallest eigenvalue corresponds to the equal superposition of all $n$-bit strings.

$\mathcal{H}_F$ should encode an objective function $f$ to be minimized through
\begin{equation}
    \label{eq:final}
    \mathcal{H}_F|\bm{z}\rangle= f(\bm{z})|\bm{z}\rangle,
\end{equation}
meaning that it is required to act diagonally on $n$-qubit computational basis states, i.e.,
\begin{equation}
    \label{eq:diagonal}
    \mathcal{H}_F=\sum_{\bm{z}}f(\bm{z})|\bm{z}\rangle\langle\bm{z}|.
\end{equation}

Thus, to avoid many-body interactions that cannot be directly modeled by one- and two-qubit gates compatible with current quantum hardware \cite{chancellor2017circuit}, a typical choice is the Ising Hamiltonian \citep{lucas2014ising}
\begin{equation}
    \label{eq:ising}
    \mathcal{H}_F=\sum_{i\in V}h_iZ_i+\sum_{(i,j)\in E}y_{ij}Z_iZ_j,
\end{equation}
where $Z_i$ is the Pauli-$Z$ operator acting on the $i$th qubit.
It maps bijectively to the objective of quadratic unconstrained binary optimization (QUBO) via the substitution $Z_i=1-2x_i$, where $x_i\in\{0,1\}$ denotes the binary decision variable.
Thus, any combinatorial problem reducible to QUBO can be transformed into \eqref{eq:ising} and solved via Theorem \ref{thm:adiabatic}.
However, finding the ground state energy in \eqref{eq:ising} is NP-hard in general \cite{barahona1982computational}, forcing the required time evolution $T$ to increase exponentially in the number of qubits $n$ \cite{albash2018adiabatic}. In practice, it is expected that the ground state of $\mathcal{H}_F$ will be prepared with some high probability despite the relaxation of the time requirement \cite{yarkoni2022quantum}.

\section{Methodology}\label{sec:methodology}
This section presents QEL, our end-to-end learning framework for CCO.
We begin by reviewing QAOA as an approximate implementation of adiabatic evolution, then propose a context re-uploading phase-separator that integrates context-aware recognition of uncertainties into the quantum optimization process.
We conclude by formalizing the variational principle-based training objective and its connection to empirical risk minimization in contextual optimization, and demonstrate that candidate solutions can be extracted without access to the true uncertain coefficients for test-time inference.
These enable QEL to be directly trained via gradients of the task loss, unlike classical methods.

\subsection{Quantum Approximate Optimization Algorithm}
\label{sec:qaoa}
QAOA, proposed by \citet{farhi2014quantum}, belongs to the class of VQA and provides an approximate implementation of Theorem \ref{thm:adiabatic}.
In QC, unitary operators are the fundamental units of computation.
Therefore, \eqref{eq:tdhamiltonian} must be expressed as a matrix exponential, called a time-ordered exponential,
\begin{equation}
    \label{eq:tdunitary}
    \mathcal{U}(t)=\mathcal{T}\exp\left[-i\int^t_0 \mathop{}\!\mathrm{d}\tau\mathcal{H}(\tau)\right],
\end{equation}
where $\mathcal{T}$ is the time-ordering operator.
However, due to the non-commutativity between \eqref{eq:initial} and \eqref{eq:ising}, i.e., $[\mathcal{H}_I,\mathcal{H}_F]\ne 0$, \eqref{eq:tdunitary} does not admit a closed-form solution, necessitating approximations such as Trotterization \cite{lloyd1996universal},
\begin{align*}
    \mathcal{U}(t)&\approx\prod_{k=1}^{p}\exp[-i\mathcal{H}(k\Delta\tau)\Delta\tau]\\
    &=\prod_{k=1}^{p}\exp[-ig(k\Delta\tau)\Delta\tau\mathcal{H}_I]\exp[-i\tilde{g}(k\Delta\tau)\Delta\tau\mathcal{H}_F],
\end{align*}
where $g(\cdot)=1-\tilde{g}(\cdot)$ and $\Delta\tau=\frac{t}{p}$.

QAOA circuits are obtained by replacing the time-dependent coefficients with trainable parameters,
\begin{equation}
    \label{eq:qaoa}
    \mathcal{U}(\bm{\theta}_I,\bm{\theta}_F)=\prod_{k=1}^{p}\exp[-i\theta^k_I\mathcal{H}_I]\exp[-i\theta^k_F\mathcal{H}_F].
\end{equation}
We refer $\exp[-i\theta^k_I\mathcal{H}_I]$ and $\exp[-i\theta^k_F\mathcal{H}_F]$ as \emph{mixer} and \emph{phase-separator}, respectively.
The phase-separator separates relative phases of quantum states to distinguish near-optimal from suboptimal solutions.
Through alternation with the mixer operator, this mechanism induces quantum interference that preferentially amplifies the probability of sampling near-optimal solutions.
In terms of VQA, the alternation of the phase-separator and mixer repeated $p$ times constitutes the ansatz.
This ansatz is shown to converge to adiabatic evolution as $p\to\infty$ \citep{farhi2014quantum}.
However, due to hardware immaturity, most studies on the practical use of QAOA have employed $p\le5$ \cite{zhou2020quantum, blekos2024review}.
In addition, we can extend the parametrization of QAOA by decomposing the phase-separator and mixer (see Appendix \ref{apd:parametrization}).

QAOA has been widely regarded as a traditional heuristic algorithm for solving a single deterministic QUBO problem.
However, recent efforts have extended its applicability beyond quadratic formulations by quadratization techniques or gate decomposition \citep{glos2022space, pelofske2024short}.
In addition, computational universality and extensibility of QAOA have been established \citep{lloyd2018quantum, morales2020universality}.
Specifically, \citet{lloyd2018quantum} proposed the application of QAOA to quantum machine learning (QML) to implement a unitary transformation that maps inputs to outputs, given a training set of input-output pairs.
In this work, we examine the generalizability of QAOA.
Particularly, by proposing the QEL framework, we focus on its applicability to solving CCO problems.

\subsection{Context Re-uploading Phase-Separator}
In this section, we introduce a \emph{context re-uploading phase-separator}, the core component of QEL designed to address the CCO problem.
Motivated by the philosophy of data re-uploading that integrates state preparation with evolution, we embed covariates directly within the phase-separator using trainable parameters, thereby enabling simultaneous context recognition and optimization.
The context re-uploading phase-separator learns intricate relationships between covariates and uncertain parameters via its internal contextual encoder, while simultaneously separating relative phases between near-optimal and suboptimal solutions.

Conventional QML algorithms employ a fixed data encoding scheme as a state preparation step, such as amplitude encoding \cite{lloyd2013quantum} and kernel-based encoding \cite{havlivcek2019supervised}, to map classical data onto quantum states, which are then processed through PQCs in the evolution step.
However, QAOA strictly requires the system to be initialized in a particular state, as in \eqref{eq:initial}.
This methodological gap presents a significant barrier to developing contextual optimization approaches grounded in QML and quantum algorithms for combinatorial optimization.
Alternatively, an emerging paradigm of data re-uploading integrates preparation and evolution by repeatedly uploading data with trainable linear parameters, and its improved trainability and expressivity through interleaving have been studied \cite{perez2020data, perez2021one, schuld2021effect}.

Motivated by original data re-uploading \cite{perez2020data}, we can construct a context-dependent final Hamiltonian by using a linear encoder to quantify uncertain coefficients $y_{ij}$ in \eqref{eq:ising} from covariates $\bm{x}_{ij}$,
\begin{equation}
    \hat{\mathcal{H}}_F=\sum_{i\in V}h_iZ_i+\sum_{(i,j)\in E}(w_0+\bm{w}_1^\top \bm{x}_{ij})Z_iZ_j.
\end{equation}
Replacing the original final Hamiltonian with $\hat{\mathcal{H}}_F$, we can upload the classical data while evolving the quantum state in the phase-separator,
\begin{align}
    \label{eq:dups_linear}
    \exp[-i\theta^k_F\mathcal{H}_F] \approx\exp[-i\theta^k_F\hat{\mathcal{H}}_F] .
\end{align}
In QEL, the context re-uploading phase separator is designed to search near-optimal solution space accelerated by the theory of QAOA, while general-purpose data re-uploading requires a decomposition-based representation to guarantee global search.

We compute stochastic gradients for PQCs using quantum-native algorithms such as the parameter-shift rule \cite{schuld2019evaluating, wierichs2022general}.
Following \cite{cervera2021meta}, we extend it to more general encoding strategies,
\begin{equation}
    \label{eq:dups_general}
    y_{ij}\approx\hat{y}_{ij}= g_{\bm{w}}(\bm{x}_{ij})\quad\forall{(i,j)\in E},
\end{equation}
where $g_{\bm{w}}(\bm{x}_{ij})$ denotes a general encoder and $\bm{w}$ represents its parameter set.
Specifically, by exploiting the bounded nature of rotation angle parameters in PQCs, we propose a logistic encoder not explored in the original data re-uploading to compare to linear encoders,
\begin{equation}
    \label{eq:dups_logistic}
    g_{\bm{w}}(\bm{x}_{ij})=\frac{w_0}{1+\exp[-\bm{w}_1^\top (\bm{x}_{ij}-\bm{w}_2)]}\quad\forall{(i,j)\in E},
\end{equation}
where $\bm{w}=\{w_0,\bm{w}_1,\bm{w}_2\}$.
Furthermore, data re-uploading is achieved by increasing the number of alternations $p$. 
The schematic structure of the context re-uploading phase-separator and mixer is given in Figure \ref{fig:QEL}.
\begin{figure}
    \centering
    \includegraphics[scale=0.25]{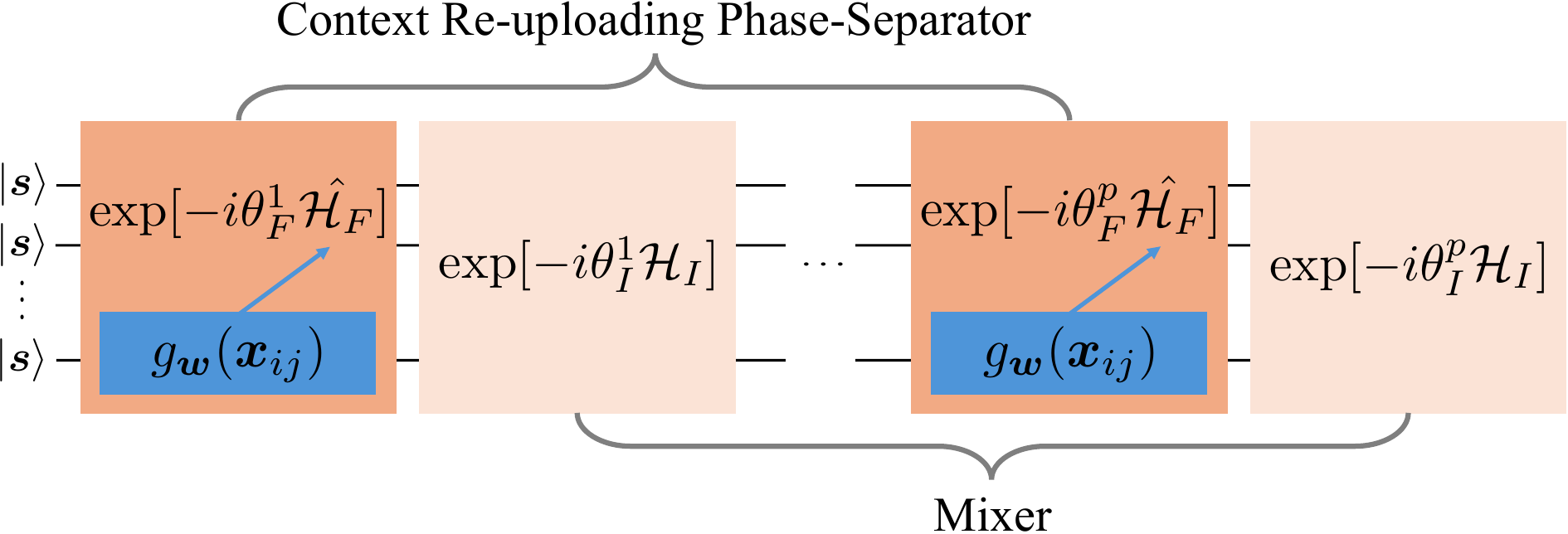}
    \caption{Architecture of the QEL ansatz extending context re-uploading phase-separators. The contextual encoder $g_{\bm{w}}(\bm{x})$ repeatedly uploads latent information of contexts directly into the context-dependent final Hamiltonian $\hat{\mathcal{H}}_F$ in the phase-separator.}
    \label{fig:QEL}
\end{figure}

Note that the optimization-aware structure based on QAOA allows QEL to have a high inductive bias, thereby improving the interpretability of its performance and enabling principled decision-making without iterative calls to optimization solvers.
In addition, it substantially reduces the number of parameters, reducing the dimensionality of the classical control.
In the future quantum era, with scalable hardware and advanced error correction, the need to tune too many classical control variables may reduce the efficiency advantages of quantum operations, offering the possibility for QEL to solve industrial-level problems efficiently.

\subsection{Training and Test-time Processes}
\label{sec:processes}
In this section, we demonstrate the trainability and testability of QEL.
Specifically, we connect the minimization of QEL loss, following the variational principle, to the empirical loss minimization of the task loss in end-to-end frameworks.
In addition, we provide a stationarity guarantee of joint policy-encoder optimization.
Then, we demonstrate how to make test-time decisions using only contextual information, without true coefficients.
We provide pseudo code for the training and test-time processes in Appendix \ref{apd:pseudocode}.

\paragraph{Training Process}
Given that QEL is an extension of QAOA, we can define its loss function following Theorem \ref{thm:variational}.
Suppose that QEL follows the original QAOA parametrization strategy.
Let $N$ be the number of i.i.d. historical data pairs $\{(\bm{x}^{(l)},\bm{y}^{(l)})\}_{l=1}^N$ where $\bm{x}^{(l)}=(\bm{x}^{(l)}_{ij}:{(i,j)\in E})$ and $\bm{y}^{(l)}=(y_{ij}^{(l)}: {(i,j)\in E})$.
The final Hamiltonian corresponding to $\bm{y}^{(l)}$ is referred to as $\mathcal{H}_F(\bm{y}^{(l)})$.
We denote the context-dependent final Hamiltonian for $\bm{y}^{(l)}$ by the encoder $g_{\bm{w}}$ as $\hat{\mathcal{H}}_F\bigl(g_{\bm{w}}(\bm{x}^{(l)})\bigr)$.
Then the QEL ansatz with $p$ alternations can be represented as
\begin{equation}
    \label{eq:QEL_ansatz}
    \mathcal{U}(\bm{\phi},\bm{x}^{(l)})
    :=\prod_{k=1}^p\exp[-i\theta^k_I\mathcal{H}_I]\exp[-i\theta^k_F\hat{\mathcal{H}}_F\bigl(g_{\bm{w}}(\bm{x}^{(l)})\bigr)],
\end{equation}
where $\bm{\phi}=\{\bm\theta_I,\bm\theta_F,\bm{w}\}$ is the entire parameter set of QEL.
Thus, the final state output by the circuit is given by
\begin{equation}
    \label{eq:QEL_final}
    |\psi(\bm{\phi},\bm{x}^{(l)})\rangle := \mathcal{U}(\bm{\phi},\bm{x}^{(l)})|\bm{s}\rangle,
\end{equation}
where $|\bm{s}\rangle$ is defined in \eqref{eq:initial}.
    QEL is optimized over a set of historical data pairs by minimizing a cost function that depends on all expected values of the Hamiltonians $\mathcal{H}_F(\bm{y}^{(l)})$ with respect to the final state \eqref{eq:QEL_final}.
    Therefore, the loss function is defined as an average of expected values:
\begin{equation}
    \label{eq:QEL_loss}
    L(\bm{\phi}):=\frac{1}{N}\sum_{l=1}^N\langle\psi(\bm{\phi},\bm{x}^{(l)})|\mathcal{H}_F(\bm{y}^{(l)})|\psi(\bm{\phi},\bm{x}^{(l)})\rangle,
\end{equation}
and the parameters within the QEL ansatz are trained to minimize the loss function by classical optimizers.
Note that the QEL ansatz construction in \eqref{eq:QEL_ansatz} could be extended by adopting more sophisticated QAOA parametrization and encoding strategies.

We now establish the connection between the QEL loss \eqref{eq:QEL_loss} based on the variational principle and empirical risk minimization for contextual optimization \eqref{eq:erm}.
\begin{theorem}[QEL as End-to-End Learning]
    \label{prop:QEL_erm}
    Let $\pi_{\bm{\theta}}\bigl(\bm{x},g_{\bm{w}}\bigr)$ denote the stochastic policy induced by measuring the final state $|\psi({\bm{\phi}},\bm{x})\rangle$ in the computational basis, where $g_{\bm{w}}$ is the contextual encoder.
    Define the cost function $C[\pi, \bm{y}] := \mathbb{E}_{\bm{z}\sim\pi}[f(\bm{z}, \bm{y})]$ where $f(\bm{z}, \bm{y}) := \langle \bm{z}|\mathcal{H}_F(\bm{y})|\bm{z}\rangle$.
    Then minimizing \eqref{eq:QEL_loss} is equivalent to
    \begin{equation}
        \label{eq:QEL_erm}
        \min_{\bm{\phi}}
        \frac{1}{N}
        \sum_{l=1}^N C \Bigl[
        	\pi_{{\bm{\theta}}} \bigl(
				\bm{x}^{(l)},g_{\bm{w}}
			\bigr), \bm{y}^{(l)}
		\Bigr].
    \end{equation}
\end{theorem}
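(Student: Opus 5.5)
The plan is to show that, for every fixed parameter set $\bm{\phi}$ and every data pair $(\bm{x}^{(l)},\bm{y}^{(l)})$, the $l$th summand of \eqref{eq:QEL_loss} equals $C\bigl[\pi_{\bm{\theta}}(\bm{x}^{(l)},g_{\bm{w}}),\bm{y}^{(l)}\bigr]$ exactly. The two objectives are then the same function of $\bm{\phi}$, so they share minimizers and optimal values, and any gradient-based procedure sees the same landscape. This yields the claimed equivalence without appeal to Theorem \ref{thm:variational}. That theorem only explains why the quantity is bounded below by the ground-state energy, i.e., by the cost of the optimal deterministic decision.

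\textbf{Step 1 (expand the state).} Write the final state \eqref{eq:QEL_final} in the computational basis,
\[
|\psi(\bm{\phi},\bm{x}^{(l)})\rangle=\sum_{\bm{z}}\alpha_{\bm{z}}(\bm{\phi},\bm{x}^{(l)})\,|\bm{z}\rangle ,
\]
where $\sum_{\bm{z}}|\alpha_{\bm{z}}|^2=1$. This normalization holds because $\mathcal{U}(\bm{\phi},\bm{x}^{(l)})$ in \eqref{eq:QEL_ansatz} is a product of exponentials of $-i$ times Hermitian operators, hence unitary, and $|\bm{s}\rangle$ is normalized. By the Born rule, measuring in the computational basis induces the policy
\[
\pi_{\bm{\theta}}(\bm{x}^{(l)},g_{\bm{w}})(\bm{z})=|\alpha_{\bm{z}}|^2 ,
\]
which is a genuine probability distribution over bit strings. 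Its dependence on $\bm{w}$ enters only through $\hat{\mathcal{H}}_F\bigl(g_{\bm{w}}(\bm{x}^{(l)})\bigr)$.

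\textbf{Step 2 (use diagonality).} The observable is the true Hamiltonian $\mathcal{H}_F(\bm{y}^{(l)})$, not the encoded one. By \eqref{eq:diagonal} (equivalently, the Ising form \eqref{eq:ising}, which contains only products of $Z$'s), it is diagonal in the computational basis:
\[
\mathcal{H}_F(\bm{y}^{(l)})=\sum_{\bm{z}}f(\bm{z},\bm{y}^{(l)})\,|\bm{z}\rangle\langle\bm{z}| ,
\]
with $f(\bm{z},\bm{y})=\langle\bm{z}|\mathcal{H}_F(\bm{y})|\bm{z}\rangle$ as in the statement. Substituting gives
\[
\langle\psi|\mathcal{H}_F(\bm{y}^{(l)})|\psi\rangle=\sum_{\bm{z},\bm{z}'}\overline{\alpha_{\bm{z}'}}\,\alpha_{\bm{z}}\,f(\bm{z},\bm{y}^{(l)})\,\langle\bm{z}'|\bm{z}\rangle .
\]
Orthonormality of the basis kills all cross terms, leaving
\[
\sum_{\bm{z}}|\alpha_{\bm{z}}|^2 f(\bm{z},\bm{y}^{(l)})=\mathbb{E}_{\bm{z}\sim\pi_{\bm{\theta}}}\bigl[f(\bm{z},\bm{y}^{(l)})\bigr]=C\bigl[\pi_{\bm{\theta}}(\bm{x}^{(l)},g_{\bm{w}}),\bm{y}^{(l)}\bigr].
\]

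\textbf{Step 3 (average).} Averaging over $l$ gives $L(\bm{\phi})$ equal to the objective in \eqref{eq:QEL_erm} identically in $\bm{\phi}=\{\bm\theta_I,\bm\theta_F,\bm{w}\}$, so the two problems coincide. It remains to point out that this is an ERM of the form \eqref{eq:erm} with a stochastic surrogate policy, and that its task loss is the QUBO cost via $Z_i=1-2x_i$.

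\textbf{Main obstacle.} The step needing care is the bookkeeping rather than any estimate. The measurement observable must be $\mathcal{H}_F(\bm{y}^{(l)})$, while the state is generated by the context-dependent $\hat{\mathcal{H}}_F$, and the diagonality argument uses only the former. I would also remark that the identity is exact per sample, and holds for any diagonal final Hamiltonian, including higher-order polynomial ones. It therefore does not depend on $p$, on the choice of encoder, or on the extended parametrizations mentioned in the paper.
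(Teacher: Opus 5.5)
Your proposal is correct and takes essentially the same route as the paper. Both arguments use diagonality to write $\mathcal{H}_F(\bm{y}^{(l)})=\sum_{\bm{z}} f(\bm{z},\bm{y}^{(l)})|\bm{z}\rangle\langle\bm{z}|$, reduce each summand of \eqref{eq:QEL_loss} to $\sum_{\bm{z}} f(\bm{z},\bm{y}^{(l)})\,|\langle\bm{z}|\psi(\bm{\phi},\bm{x}^{(l)})\rangle|^2=C\bigl[\pi_{\bm{\theta}}(\bm{x}^{(l)},g_{\bm{w}}),\bm{y}^{(l)}\bigr]$, and average over $l$, with your extra checks (unitarity for normalization, orthonormality killing cross terms, observable being $\mathcal{H}_F(\bm{y}^{(l)})$ rather than $\hat{\mathcal{H}}_F$) only making explicit what the paper leaves implicit.
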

\begin{proof}
    Since $\mathcal{H}_F(\bm{y}^{(l)})$ is diagonal in the computational basis, we have $\mathcal{H}_F(\bm{y}^{(l)}) = \sum_{\bm{z}} f(\bm{z}, \bm{y}^{(l)}) |\bm{z}\rangle\langle\bm{z}|$.
    Expanding the expectation value, we obtain
    \begin{align*}
        \langle\psi({\bm{\phi}},\bm{x}^{(l)})|\mathcal{H}_F(\bm{y}^{(l)})|\psi({\bm{\phi}},\bm{x}^{(l)})\rangle
        &= \sum_{\bm{z}} f(\bm{z}, \bm{y}^{(l)})|\langle \bm{z}|\psi({\bm{\phi}},\bm{x}^{(l)})\rangle|^2 \\
        &= \mathbb{E}_{\bm{z}\sim\pi_{{\bm{\theta}}}(\bm{x}^{(l)},g_{\bm{w}})}[f(\bm{z}, \bm{y}^{(l)})] \\
        &= C\bigl[\pi_{{\bm{\theta}}}(\bm{x}^{(l)},g_{\bm{w}}),\bm{y}^{(l)}\bigr].
    \end{align*}
    Substituting this result into \eqref{eq:QEL_loss} yields \eqref{eq:QEL_erm}.
\end{proof}
Instantiated by end-to-end learning, this theorem reveals the trainability of QEL, where the contextual encoder $g_{\bm{w}}$ and the quantum surrogate policy $\pi_{{\bm{\theta}}}$ are jointly trained end-to-end via empirical risk minimization of the downstream task loss $C$ directly, despite the discreteness and nonconvexity.

In addition, a canonical diminishing-step stochastic gradient recursion of joint QEL training, using full-batch finite-measurement parameter-shift gradients \cite{schuld2019evaluating, wierichs2022general}, admits convergence to stationarity.

\begin{theorem}[Convergence to stationarity]
\label{thm:qel_diminishing_step_stationarity}
    Consider the update $\bm{\phi}_{t+1}=\bm{\phi}_t-\eta_t\bm{g}_t,$
    where \(\bm{g}_t\) is a full-batch finite-measurement parameter-shift estimator.
    Suppose the regularity conditions \textup{(R1)}--\textup{(R4)} in Appendix~\ref{app:stationary_convergence_qel} hold.
    Assume that the diminishing step-size sequence satisfies
    \[
        0<\eta_t\le \frac{1}{\beta},
        \qquad
        \eta_t\to 0,
        \qquad
        \sum_{t=0}^{\infty}\eta_t=\infty,
        \qquad
        \sum_{t=0}^{\infty}\eta_t^2<\infty,
    \]
    where $\beta$ denotes the smoothness constant, supplied by Lemma~\ref{lem:appendixF_smoothness}.
    Then the iterates are asymptotically stationary in expected squared gradient norm:
    \[
        \lim_{t\to\infty}
        \mathbb E\!\left[\|\nabla L(\bm{\phi}_t)\|_2^2\right]
        =0.
    \]
    Moreover, the full stochastic trajectory is asymptotically stationary in the almost-sure sense:
    \[
        \mathbb P\!\left(
            \lim_{t\to\infty}\|\nabla L(\bm{\phi}_t)\|_2 = 0
        \right)=1.
    \]
\end{theorem}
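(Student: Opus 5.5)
The plan is to run the classical stochastic-approximation argument for nonconvex smooth objectives, in the Bertsekas--Tsitsiklis / Bottou--Curtis--Nocedal style. Since (R1)--(R4) are not visible here, I will assume they supply the following:
\begin{itemize}
    \item unbiasedness of the parameter-shift estimator, $\mathbb E[\bm g_t\mid\mathcal F_t]=\nabla L(\bm\phi_t)$, where $\mathcal F_t$ is the history up to step $t$;
    \item a bounded second moment, $\mathbb E[\|\bm g_t-\nabla L(\bm\phi_t)\|_2^2\mid\mathcal F_t]\le\sigma^2$, with $\sigma^2$ depending on the shot count;
    \item $L$ bounded below;
    \item $\beta$-smoothness of $L$ via Lemma~\ref{lem:appendixF_smoothness}.
\end{itemize}
These are natural for this setting. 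Each $\langle\psi|\mathcal H_F(\bm y^{(l)})|\psi\rangle$ is a trigonometric polynomial in $\bm\theta_I,\bm\theta_F$, composed with a smooth encoder $g_{\bm w}$ (logistic or linear). Lower-boundedness then holds with $L\ge \frac1N\sum_l E_0(\bm y^{(l)})$ by Theorem~\ref{thm:variational}. Finite shots of a bounded observable give bounded variance.

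\textbf{Step 1 (one-step descent).} By $\beta$-smoothness,
\[
L(\bm\phi_{t+1})\le L(\bm\phi_t)-\eta_t\nabla L(\bm\phi_t)^\top\bm g_t+\tfrac{\beta\eta_t^2}{2}\|\bm g_t\|_2^2 .
\]
Take the conditional expectation and use unbiasedness together with $\mathbb E\|\bm g_t\|^2\le\|\nabla L\|^2+\sigma^2$. With $\eta_t\le 1/\beta$ this gives
\[
\mathbb E[L(\bm\phi_{t+1})\mid\mathcal F_t]\le L(\bm\phi_t)-\tfrac{\eta_t}{2}\|\nabla L(\bm\phi_t)\|_2^2+\tfrac{\beta\sigma^2}{2}\eta_t^2 .
\]

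\textbf{Step 2 (summability).} Apply the Robbins--Siegmund almost-supermartingale theorem to $L(\bm\phi_t)-\inf L\ge0$, using $\sum\eta_t^2<\infty$. This yields $\sum_t\eta_t\|\nabla L(\bm\phi_t)\|^2<\infty$ almost surely and in expectation, and $L(\bm\phi_t)$ converges almost surely. Combined with $\sum\eta_t=\infty$, it gives $\liminf\|\nabla L(\bm\phi_t)\|=0$, both almost surely and for $\mathbb E\|\nabla L\|^2$.

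\textbf{Step 3 (upgrading liminf to lim).} This is the main obstacle. I would use the standard upcrossing argument (Bertsekas--Tsitsiklis, Prop.~3). Suppose $\|\nabla L(\bm\phi_t)\|$ crosses from below $\epsilon$ to above $2\epsilon$ infinitely often. On each crossing interval $[s,u]$, the Lipschitz gradient gives $\epsilon\le\beta\|\bm\phi_u-\bm\phi_s\|\le\beta\sum_{t=s}^{u-1}\eta_t\|\bm g_t\|$. The right-hand side splits into a drift term $\sum\eta_t\|\nabla L\|$, controlled by $\sum\eta_t\|\nabla L\|^2<\infty$ since $\|\nabla L\|\ge\epsilon$ on the interval, and a noise term $\sum\eta_t(\bm g_t-\nabla L)$. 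The noise term is a martingale whose tails vanish because $\sum\eta_t^2\sigma^2<\infty$ (martingale convergence theorem). This contradicts infinitely many crossings, so $\|\nabla L(\bm\phi_t)\|\to0$ almost surely.

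\textbf{Step 4 (convergence in expectation).} The expectation statement follows by dominated convergence. Boundedness of $\nabla L$ is immediate from compactness and periodicity in the angles, plus bounded encoder derivatives, which I would take from (R1)--(R2). Alternatively, one can run the deterministic analogue of Step 3 on $a_t=\mathbb E\|\nabla L(\bm\phi_t)\|^2$, which satisfies $|a_{t+1}-a_t|\le C\eta_t$ and $\sum\eta_t a_t<\infty$; the standard lemma then gives $a_t\to0$.

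The delicate point is ensuring that the smoothness and gradient bounds hold globally in $\bm w$ and not just locally. For the linear encoder these bounds hold only if the covariates and the $\bm w$-domain are bounded. I expect (R1)--(R4) to impose exactly this, for example a compact parameter set or bounded data, and I would check that the constant $\beta$ from Lemma~\ref{lem:appendixF_smoothness} is uniform under these conditions.
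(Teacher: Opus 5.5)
Your proof is correct and follows the paper's skeleton: the conditional descent inequality, Robbins--Siegmund applied to $L(\bm\phi_t)-\bar E_0\ge 0$, an excursion argument, and dominated convergence using the bound on $\|\nabla L\|_2$ over the compact convex set $\mathcal K$ of (R1). That set is exactly the compactness you anticipated, and its convexity is what justifies the descent lemma along each step.

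The real difference is how you control noise in the excursion step. The paper uses no martingale at all. Every measurement outcome is bounded by $H_{\max}$ (R3) and $\|\nabla\alpha_q^{(l)}\|_2\le A_1$ on $\mathcal K$, so the estimator is almost surely bounded: $\|\bm g_t\|_2\le G_{\max}=2QH_{\max}A_1$. Hence $\bigl|\|\nabla L(\bm\phi_{t+1})\|_2-\|\nabla L(\bm\phi_t)\|_2\bigr|\le\beta G_{\max}\eta_t$ pathwise. Each excursion from at least $2\varepsilon$ down to below $\varepsilon$ therefore adds at least $\varepsilon^3/(\beta G_{\max})$ to the series $\sum_t\eta_t\|\nabla L(\bm\phi_t)\|_2^2$, which is summable.

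Your Bertsekas--Tsitsiklis route instead relies on almost-sure convergence of the $L^2$-bounded martingale $\sum_t\eta_t(\bm g_t-\nabla L(\bm\phi_t))$, guaranteed by $\sum_t\eta_t^2\sigma^2/M<\infty$. It needs only the conditional variance bound, so it would survive unbounded measurement noise. The paper's version is shorter and purely deterministic on the Robbins--Siegmund event.

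If you keep your route, keep the noise inside a single norm:
$\|\bm\phi_u-\bm\phi_s\|_2\le\sum_{t=s}^{u-1}\eta_t\|\nabla L(\bm\phi_t)\|_2+\bigl\|\sum_{t=s}^{u-1}\eta_t(\bm g_t-\nabla L(\bm\phi_t))\bigr\|_2$.
After the step $\le\beta\sum_t\eta_t\|\bm g_t\|_2$ in your display, the leftover term is $\sum_t\eta_t\|\bm g_t-\nabla L(\bm\phi_t)\|_2$. That is not a martingale, so the martingale convergence theorem says nothing about it. In this setting it is harmless only because of the almost-sure bound $G_{\max}$, which is precisely the paper's shortcut.
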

The full regularity conditions, lemmas, and a proof are deferred to Appendix~\ref{app:stationary_convergence_qel}.

For practical quantum hardware, exact information of $|\psi(\bm{\phi}, \bm{x}^{(l)})\rangle$ is not accessible.
Alternatively, we are required to statistically estimate \eqref{eq:QEL_loss} from measurements.
Let $N_\text{shots}$ be the number of measurements called shots, and suppose that $n$-bit string samples $\{\bm{z}_1^{(l)},\bm{z}_2^{(l)},\cdots,\bm{z}^{(l)}_{N_\text{shots}}\}$ are measured by the observable $\mathcal{H}_F(\bm{y}^{(l)})$ with respect to the state \eqref{eq:QEL_final}.
Then the loss function \eqref{eq:QEL_loss} is estimated as
\begin{equation}
    \label{eq:QEL_loss_est}
    \tilde{L}({\bm{\phi}}) = \frac{1}{N}\sum_{l=1}^N \frac{1}{N_\text{shots}} \sum_{m=1}^{N_\text{shots}} f(\bm{z}_m^{(l)}, \bm{y}^{(l)}),
\end{equation}
where $\bm{z}_m^{(l)} \sim |\langle \bm{z} | \psi({\bm{\phi}}, \bm{x}^{(l)}) \rangle|^2$.
By iteratively updating the parameters to minimize \eqref{eq:QEL_loss_est}, the circuit learns to amplify the probability of sampling near-optimal solutions while suppressing low-quality ones.
The computational overhead of shot-based estimation remains manageable in practice \citep{weidenfeller2022scaling}.

\paragraph{Test-time Process}
Based on the trained stochastic policy distribution, at test time, CCO problems require concrete bitstring decisions rather than a distribution, so one must decode the learned output distribution into a single candidate solution using sampling.
For test-time inference, the true uncertain coefficients are not accessible at the decision time.
Instead, we measure all qubits of the trained circuit in the computational basis, sample $N_{\text{shots}}$ bitstrings, decode one candidate solution, and utilize it as a prescriptive decision.
This procedure is justified by the following equivalence, established via diagonalization.

\begin{proposition}[Equivalence of Sampling]
    \label{lem:sampling_equiv}
    Let $|\psi\rangle$ be an $n$-qubit state and $\mathcal{H}_F(\bm{y})$ be the Ising Hamiltonian \eqref{eq:ising} with uncertain coefficients $\bm{y}$.
    Then sampling from the computational-basis measurement distribution and subsequently evaluating $f(\bm{z}, \bm{y})$ for each sample yields the same expectation as directly estimating $\langle\psi|\mathcal{H}_F(\bm{y})|\psi\rangle$.
\end{proposition}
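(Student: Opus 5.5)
The plan is to diagonalize $\mathcal{H}_F(\bm{y})$ in the computational basis and expand $|\psi\rangle$ in that basis. The argument mirrors the proof of Theorem~\ref{prop:QEL_erm}, but here it is stated for an arbitrary $n$-qubit state.

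First, I would check that the Ising Hamiltonian \eqref{eq:ising} is diagonal. Each $Z_i$ acts on a basis state as $Z_i|\bm{z}\rangle=(-1)^{z_i}|\bm{z}\rangle$. Hence every product $Z_iZ_j$ also maps $|\bm{z}\rangle$ to a scalar multiple of itself. This gives
\[
\mathcal{H}_F(\bm{y})|\bm{z}\rangle=f(\bm{z},\bm{y})|\bm{z}\rangle,
\qquad
f(\bm{z},\bm{y})=\sum_{i\in V}h_i(-1)^{z_i}+\sum_{(i,j)\in E}y_{ij}(-1)^{z_i+z_j},
\]
so that $\mathcal{H}_F(\bm{y})=\sum_{\bm{z}}f(\bm{z},\bm{y})|\bm{z}\rangle\langle\bm{z}|$, exactly as in \eqref{eq:diagonal}. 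This also agrees with the definition $f(\bm{z},\bm{y})=\langle\bm{z}|\mathcal{H}_F(\bm{y})|\bm{z}\rangle$ used in Theorem~\ref{prop:QEL_erm}.

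Second, I would write $|\psi\rangle=\sum_{\bm{z}}\alpha_{\bm{z}}|\bm{z}\rangle$. By the Born rule, a computational-basis measurement returns $\bm{z}$ with probability $p(\bm{z})=|\alpha_{\bm{z}}|^2$. Normalization gives $\sum_{\bm{z}}p(\bm{z})=1$, so $p$ is a genuine distribution. Substituting the spectral decomposition and using orthonormality of the basis, I would compute
\[
\langle\psi|\mathcal{H}_F(\bm{y})|\psi\rangle=\sum_{\bm{z}}f(\bm{z},\bm{y})|\alpha_{\bm{z}}|^2=\mathbb{E}_{\bm{z}\sim p}\bigl[f(\bm{z},\bm{y})\bigr].
\]

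Third, I would pass to the sampling estimator. If $\bm{z}_1,\dots,\bm{z}_{N_\text{shots}}$ are i.i.d.\ draws from $p$, linearity of expectation shows that the empirical mean $\frac{1}{N_\text{shots}}\sum_m f(\bm{z}_m,\bm{y})$ is an unbiased estimator of this quantity, as in \eqref{eq:QEL_loss_est}. Since $f$ is bounded on the finite set $\{0,1\}^n$, the estimator has finite variance and, by the law of large numbers, converges as $N_\text{shots}\to\infty$.

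There is no substantive obstacle. The only point needing care is the observation that $\bm{y}$ enters only through the eigenvalues $f(\bm{z},\bm{y})$ and not through the measurement distribution. As a consequence, one can sample without knowing $\bm{y}$ and evaluate $f$ afterwards, which is exactly what justifies the test-time procedure.
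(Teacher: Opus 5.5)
Your proposal is correct and follows the paper's proof: both use that $\mathcal{H}_F(\bm{y})$ is diagonal in the computational basis, so the Born-rule probabilities $|\langle\bm{z}|\psi\rangle|^2$ weighting the eigenvalues $f(\bm{z},\bm{y})$ sum to $\langle\psi|\mathcal{H}_F(\bm{y})|\psi\rangle$. Your explicit diagonality check via $Z_i|\bm{z}\rangle=(-1)^{z_i}|\bm{z}\rangle$ and your remark that the finite-shot empirical mean is unbiased are harmless additions that the paper leaves implicit.
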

\begin{proof}
    Since $\mathcal{H}_F(\bm{y})$ is diagonal in the computational basis, measuring all qubits in the computational basis yields bitstring $\bm{z}$ with probability $p(\bm{z}) = |\langle\bm{z}|\psi\rangle|^2$.
    The expectation of $f(\bm{z}, \bm{y})$ under this distribution is
    \begin{align*}
        \sum_{\bm{z}} p(\bm{z}) f(\bm{z}, \bm{y})&=\sum_{\bm{z}} |\langle\bm{z}|\psi\rangle|^2 \langle\bm{z}|\mathcal{H}_F(\bm{y})|\bm{z}\rangle= \langle\psi|\mathcal{H}_F(\bm{y})|\psi\rangle,
    \end{align*}
    which completes the proof.
\end{proof}
Therefore, from the trained policy distribution of QEL, we can sample candidate solutions at test time without knowledge of $\bm{y}$.
In standard QAOA workflows, the most common way to decode is to use the most probable bitstring \citep{sharma2022openqaoa}, while advanced decoding strategies could exploit this distribution to obtain higher-quality and more refined candidate solutions.

\section{Experimental Evaluation}
To evaluate the potential of QEL, we conduct experiments on the GPU-based quantum simulator and compare the decision performance with classical benchmarks.
Detailed simulation hardware/software specifications, data generation, and hyperparameters are provided in Appendix \ref{apd:implementation}.

\subsection{Experimental Setup}
\label{sec:experimental}
\paragraph{Problems}
We focus on two quadratic problems, namely the undirected weighted maximum cut problem (MaxCut) and the quadratic assignment problem (QAP), as well as the bipartite matching problem (BMP) in \citep{elmachtoub2022smart}.
Nevertheless, co-using the techniques presented in Section \ref{sec:qaoa} suggests the potential for generalization of QEL to higher-order formulations.
Due to computational constraints in our simulation environment, we restrict our experimental analysis to problem instances with 16 and 25 variables for MaxCut and QAP, and 25 for BMP.
These scale limitations are anticipated to be overcome as quantum technology advances and becomes more accessible.

We employ constraint penalization to solve QAP and BMP \cite{lucas2014ising} and use a decoding strategy to select the most probable feasible bitstring.
While it is well-established that penalization increases the complexity of the cost landscape \cite{brandhofer2022benchmarking}, constraint-preserving mixers that restrict solution search to the feasible subspace have gained significant attention as an alternative approach \cite{hadfield2022analytical}. 
Nevertheless, we adopt the conventional penalization approach, as our primary contribution is the first QC-based framework for addressing contextual combinatorial optimization problems.
We note that it is possible to construct hybrid architectures combining context re-uploading phase-separators with constraint-preserving mixers for future extensions.
Details on the problem formulations and penalization methods for QUBO conversion are provided in Appendix \ref{apd:problem}.

\paragraph{Benchmark}
Due to the quadraticity and nonconvexity of MaxCut and QAP, benchmarks are required to handle general objective functions.
Referring to \citet{geng2024benchmarking}, we comprehensively compare QEL with Listwise loss-based Learning-to-Rank (List-LTR) in \citet{mandi2022decision} and Locally Optimized Decision Loss (LODL) approaches in \citet{shah2022decision}.
Note that List-LTR requires computing true optimal solutions for all training instances during initialization, and additionally incurs probabilistic solver calls during training to expand the solution cache, imposing a computational burden on large-scale problems.
LODL uses optimization solver calls to evaluate the decision loss of sampled perturbations around every training instance’s ground-truth label to learn surrogate losses.

\paragraph{Model Structure}
We use two contextual encoders, the linear encoder in \eqref{eq:dups_linear} and the logistic encoder in \eqref{eq:dups_logistic}, called QEL-Lin and QEL-Log, respectively, to check the influence of encoders.
We adapt the with-bias QAOA parametrization strategy to construct the surrogate policy; see Appendix \ref{apd:parametrization}.
In addition, we use $p=3,4$ in \eqref{eq:QEL_ansatz}.
For List-LTR and LODL, according to \citet{geng2024benchmarking}, we use an MLP model with the architecture $(2,32)$, where each entry denotes the number of hidden layers and the dimension of each hidden layer, respectively.

\subsection{Results}

\begin{table*}
\centering
\caption{Comparison of average relative regret and the number of parameters.}
\label{tab:result}
\begin{adjustbox}{max width=\textwidth}
\begin{tabular}{cclrrrrrrrrrrr}
\hline
                         &          &  & \multicolumn{5}{c}{Regret (\%)}                                                                                                 & \multicolumn{1}{c}{} & \multicolumn{5}{c}{\#Parameters}                                                                                                \\ \cline{4-8} \cline{10-14} 
                         &          &  & \multicolumn{2}{c}{MaxCut}                        & \multicolumn{2}{c}{QAP}                           & \multicolumn{1}{c}{BMP} & \multicolumn{1}{c}{} & \multicolumn{2}{c}{MaxCut}                        & \multicolumn{2}{c}{QAP}                           & \multicolumn{1}{c}{BMP} \\ \cline{4-8} \cline{10-14} 
                         &          &  & \multicolumn{1}{c}{16V} & \multicolumn{1}{c}{25V} & \multicolumn{1}{c}{16V} & \multicolumn{1}{c}{25V} & \multicolumn{1}{c}{25V} & \multicolumn{1}{c}{} & \multicolumn{1}{c}{16V} & \multicolumn{1}{c}{25V} & \multicolumn{1}{c}{16V} & \multicolumn{1}{c}{25V} & \multicolumn{1}{c}{25V} \\ \hline
LODL                     & $(2,32)$ &  & 9.24                    & 8.88                    & \textbf{4.03}           & \textbf{5.05}           & 80.00                   &                      & 12,728                  & 30,188                  & 2,640                   & 3,513                   & 9,313                   \\ \hline
List-LTR                 & $(2,32)$ &  & 9.34                    & 8.30                    & 4.98                    & 5.47                    & 87.20                   &                      & 12,728                  & 30,188                  & 2,640                   & 3,513                   & 9,313                   \\ \hline
\multirow{2}{*}{QEL-Lin} & $p=3$    &  & 5.56                    & 4.69                    & 4.57                    & 6.29                    & 80.00                   &                      & 9                       & 9                       & 12                      & 12                      & 266                     \\
                         & $p=4$    &  & 5.49                    & 4.97                    & 5.70                    & 5.75                    & 80.00                   &                      & 11                      & 11                      & 15                      & 15                      & 269                     \\ \hline
\multirow{2}{*}{QEL-Log} & $p=3$    &  & 5.17                    & 4.52                    & 4.55                    & 5.18                    & 75.00                   &                      & 11                      & 11                      & 14                      & 14                      & 522                     \\
                         & $p=4$    &  & \textbf{5.06}           & \textbf{4.21}           & 4.47                    & 5.60                    & \textbf{60.00}          &                      & 13                      & 13                      & 17                      & 17                      & 525                     \\ \hline
\end{tabular}
\end{adjustbox}
\end{table*}
In this section, we compare the decision quality of QEL to the benchmarks using relative regret with respect to the true optimal objective value.
For each test instance $(\bm{x}^{(l)},\bm{y}^{(l)})$, let $\hat{\bm{a}}^{(l)}$ denote the decision obtained by a given method using $\bm{x}^{(l)}$, and let $\bm{a}^{*(l)} \in \arg\min_{\bm{a}\in\mathcal{A}}
C(\bm{a},\bm{y}^{(l)})$
be the true optimal decision under the realized uncertain coefficients $\bm{y}^{(l)}$. 
We define a relative regret of $\hat{\bm{a}}^{(l)}$ as:
\[
\operatorname{RR}^{(l)} = \frac{C(\hat{\bm{a}}^{(l)},\bm{y}^{(l)})-C(\bm{a}^{*(l)},\bm{y}^{(l)})}{C(\bm{a}^{*(l)},\bm{y}^{(l)})}.
\]
We report the average relative regret on the test data, along with the number of parameters, in Table \ref{tab:result}.
In addition, to compare their decision qualities in terms of uncertainty, we provide 95\% confidence intervals of each regret result in Appendix~\ref{apd:results}.

For MaxCut, QEL consistently achieves superior decision quality compared with classical baselines, irrespective of the encoder type and the number of alternations $p$.
For BMP, despite the inherent limitations of penalty-based constraint handling, QEL-Log outperforms both classical benchmarks in terms of relative regret, while QEL-Lin achieves comparable decision quality.
In QAP, although LODL yields the best overall performance across the considered problem sizes, the best-case performance of QEL-Log still surpasses that of List-LTR.
When $p$ is fixed, QEL-Log consistently outperforms QEL-Lin for both $p=3$ and $p=4$, suggesting that incorporating a more expressive nonlinear encoder can further improve the decision quality of the QEL framework.
Moreover, except for two cases, increasing the number of alternations generally leads to lower regret under the same encoder design.
This trend is consistent with the fact that larger $p$ enables a more accurate approximation of adiabatic evolution, thereby highlighting a key strength of QEL grounded in its underlying physical principle.
Looking ahead, the emergence of scalable and error-corrected QC technology, namely fault-tolerant quantum computing (FTQC), is expected to further enhance the practical potential of QEL by enabling larger numbers of decision variables and substantially greater alternation counts.

Across the same problem instances, QEL attains comparable or superior decision quality with one to three orders of magnitude fewer trainable parameters than the classical benchmarks.
This parameter efficiency is particularly meaningful for quantum implementations, where the size of the classical control space will directly affect the overhead of quantum-classical hybrid optimization as FTQC becomes available.
This suggests that QEL may provide a resource-efficient route to practical quantum advantage while maintaining competitive decision quality.

\section{Conclusion}
In this work, we introduced Quantum End-to-End Learning (QEL), the first QC-based framework for CCO that integrates contextual encoding directly within the quantum optimization process through our proposed context re-uploading phase-separator.
By leveraging the variational and adiabatic principles, QEL enables end-to-end training on task loss without requiring access to true optimal solutions or continuous relaxation.
In addition, the optimization-aware structure of QEL improves interoperability and enables principled decision-making without heavy optimization-solver runs.
Our experiments show that QEL achieves competitive decision quality compared to classical benchmarks while requiring substantially fewer parameters, suggesting practical applicability on future quantum hardware.
Promising directions include integrating constraint-preserving mixers to guarantee feasibility, extending the framework to higher-order polynomial objectives, and validating on actual quantum processors as hardware capabilities advance.

\bibliographystyle{plainnat}
\bibliography{manuscript}

\newpage
\appendix

\section{Mathematical Formulation for Contextual Optimization}
\label{apd:contextual}
For a more detailed review of mathematical settings for contextual optimization, please refer to \citet{sadana2025survey}.
In the contextual optimization paradigm, the goal is to find a decision (i.e., an action) $\bm{a}$ within a feasible set $\mathcal{A}\subseteq\mathbb{R}^{d_{\bm{a}}}$ that minimizes a cost function $C(\bm{a},\bm{y})$, where $\bm{y}\in\mathcal{Y}\subseteq\mathbb{R}^{d_{\bm{y}}}$ represents uncertain coefficients in the objective function.
These coefficients remain unknown at decision time.
However, a vector of covariates $\bm{x}\in\mathcal{X}\subseteq\mathbb{R}^{d_{\bm{x}}}$, correlated with $\bm{y}$, is observed prior to selecting $\bm{a}$, thereby utilizing the information of contexts for decision-making.
We denote by $\mathbb{P}_{\bm{y}|\bm{x}}$ the conditional distribution of the uncertain coefficients $\bm{y}$ given the covariates $\bm{x}$.
Given the covariates $\bm{x}$, the decision-maker wants to find an optimal action minimizing the expected cost,
\begin{equation*}
    \bm{a}^*(\bm{x}) \in \underset{\bm{a} \in \mathcal{A}}{\arg\min} \, \mathbb{E}_{\mathbb{P}_{\bm{y}|\bm{x}}} [C(\bm{a}, \bm{y})]:=h(\bm{a},\mathbb{P}_{\bm{y}|\bm{x}}),
\end{equation*}
where $h(\cdot,\cdot)$ is defined as an expected cost operator that receives the action and the distribution, respectively.
By the interchangeability principle in Theorem 14.60 of \citet{rockafellar1998variational}, solving the optimal action problem for any covariate $\bm{x}$ is equivalent to solving the optimal policy to minimize the long-term expected cost called the risk,
\begin{equation*}
    \bm{\pi}^*\in\underset{\bm{\pi} \in \Pi}{\arg\min} \,\mathbb{E}_{\mathbb{P}} \bigl[C\bigl(\pi(\bm{x}), \bm{y}\bigr)\bigr],
\end{equation*}
where $\mathbb{P}$ represents the joint distribution of $\bm{x}$ and $\bm{y}$ over $\mathcal{X}\times\mathcal{Y}$ and $\Pi$ denotes the space of all feasible policies $\pi$ from $\mathcal{X}\to\mathcal{A}$.
However, the true joint distribution $\mathbb{P}$ and the true policy space $\Pi$ are unknown.
Instead, historical data $\mathcal{D}=\{ (\bm{x}^{(l)},\bm{y}^{(l)}) \}^N_{l=1}$ assumed to be independent and identically distributed (i.i.d.) realizations of $(\bm{x},\bm{y})\in\mathcal{X}\times\mathcal{Y}$ can be used as an empirical distribution, and the decision-maker aims to find a policy that approximates the optimal policy.
The data can be generalized to $\{(\bm{x}^{(l)},\hat{\mathbb{P}}^{(l)})\}$ where $\hat{\mathbb{P}}^{(l)}$ is an unbiased estimator for the conditional distribution $\mathbb{P}_{\bm{y}|\bm{x}^{(l)}}$ \citep{rychener2023end}.

\section{Parametrization Strategies of Quantum Approximate Optimization Algorithms}
\label{apd:parametrization}
We can extend the parametrization of QAOA by decomposing the phase-separator and mixer.
Since the terms in \eqref{eq:initial} and \eqref{eq:ising} commute within each Hamiltonian, we have
\begin{align}
    \exp[-i\theta_F^k\mathcal{H}_F]
    &= \exp\Biggl[
        -i\theta_F^k\biggl(
            \sum_{i\in V}h_iZ_i+\sum_{(i,j)\in E}y_{ij}Z_iZ_j
        \biggr)
    \Biggr]\notag\\
    &= \prod_{i\in V}\exp[-i\theta_F^k h_iZ_i]
    \prod_{(i,j)\in E}\exp[-i\theta_F^k y_{ij}Z_iZ_j],\notag\\
    \exp[-i\theta_I^k\mathcal{H}_I]\notag
    &= \exp\biggl[i\theta_I^k\sum_{i\in V}X_i\biggr] = \prod_{i\in V}\exp[i\theta_I^kX_i].\notag
\end{align}
The original parametrization applies a single parameter $\theta^k_F$ and $\theta^k_I$ step to all individual unitary gates of the $k$th alternation in the phase-separator and mixer, respectively.
Alternatively, following the decomposition, each unitary gate can be assigned distinct parameters, known as multi-angle QAOA \cite{herrman2022multi},
\begin{align*}
    \mathcal{U}_{F,\mathrm{MA}}^k
    &= \prod_{i\in V}\exp[-i\theta_F^{k,i}h_iZ_i]\prod_{(i,j)\in E}\exp[-i\theta_F^{k,(i,j)}y_{ij}Z_iZ_j],\\
    \mathcal{U}_{I,\mathrm{MA}}^k
    &= \prod_{i\in V}\exp[i\theta_I^{k,i}X_i],
\end{align*}
where the original parametrization corresponds to the special case $\theta_F^{k,i} = \theta_F^{k,(i,j)} = \theta_F^k\ \forall i \in V,\ (i,j) \in E$ and $\theta_I^{k,i} = \theta_I^k\quad \forall i \in V$.
While this increases expressivity, it significantly raises the classical computational cost for parameter optimization.
As a compromise between original QAOA and multi-angle QAOA, the with-bias parametrization \cite{sharma2022openqaoa} independently parametrizes the mixer, linear $Z_i$ terms, and quadratic $Z_iZ_j$ terms, while sharing parameters within each term,
\begin{align*}
    \mathcal{U}_{F,\mathrm{WB}}^k
    &= \prod_{i\in V}\exp[-i\theta_F^{k,0}h_iZ_i]\prod_{(i,j)\in E}\exp[-i\theta_F^{k,1}y_{ij}Z_iZ_j],\\
    \mathcal{U}_{I,\mathrm{WB}}^k
    &= \prod_{i\in V}\exp[i\theta_I^kX_i],
\end{align*}
which corresponds to the special case of multi-angle QAOA as $\theta^{k,i}_F = \theta_F^{k,0}\ \forall i \in V$, $\theta_F^{k,(i,j)} = \theta_F^{k,1}\ \forall (i,j) \in E$, and $\theta_I^{k,i} = \theta_I^k\quad \forall i \in V$.

\section{Pseudo Code for QEL Training and Test-time Processes}
\label{apd:pseudocode}
Algorithm~\ref{alg:QEL_training} and Algorithm~\ref{alg:QEL_test} present the training and test-time procedures of QEL, respectively.
\begin{algorithm}
  \caption{Training Process for QEL}
  \label{alg:QEL_training}
  \begin{algorithmic}
    \REQUIRE Historical data pairs $\{(\bm{x}^{(l)},\bm{y}^{(l)})\}_{l=1}^N$, number of alternations $p$, number of shots $N_{\text{shots}}$, number of epochs $N_{\text{epochs}}$
    \ENSURE Trained parameters $\bm{\phi}^*=\{\bm{\theta}_I^*,\bm{\theta}_F^*,\bm{w}^*\}$
    \STATE Initialize parameters $\bm{\phi}=\{\bm{\theta}_I,\bm{\theta}_F,\bm{w}\}$ randomly
    \STATE Initialize mixer Hamiltonian $\mathcal{H}_I$
    \FOR{$t=1$ \textbf{to} $N_{\text{epochs}}$}
    \STATE $\mathcal{L} \leftarrow 0$
    \FOR{$l=1$ \textbf{to} $N$}
    \FOR{each edge $(i,j) \in E$}
    \STATE $\hat{y}_{ij} \leftarrow g_{\bm{w}}(\bm{x}^{(l)}_{ij})$
    \ENDFOR
    \STATE $\hat{\mathcal{H}}_F \leftarrow \sum_{i\in V}h_i^{(l)}Z_i+\sum_{(i,j)\in E} \hat{y}_{ij} Z_i Z_j$
    \STATE $|\psi\rangle \leftarrow |\bm{s}\rangle$
    \FOR{$k=1$ \textbf{to} $p$}
    \STATE $|\psi\rangle \leftarrow \exp(-i\theta_F^k \hat{\mathcal{H}}_F)|\psi\rangle$
    \STATE $|\psi\rangle \leftarrow \exp(-i\theta_I^k \mathcal{H}_I)|\psi\rangle$
    \ENDFOR
    \STATE $C^{(l)} \leftarrow 0$
    \FOR{$m=1$ \textbf{to} $N_{\text{shots}}$}
    \STATE $\bm{z}_m \sim |\langle \bm{z} | \psi \rangle|^2$
    \STATE $C^{(l)} \leftarrow C^{(l)} + f(\bm{z}_m, \bm{y}^{(l)})$
    \ENDFOR
    \STATE $C^{(l)} \leftarrow C^{(l)} / N_{\text{shots}}$
    \STATE $\mathcal{L} \leftarrow \mathcal{L} + C^{(l)}$
    \ENDFOR
    \STATE $\mathcal{L} \leftarrow \mathcal{L} / N$
    \STATE Compute $\nabla_{\bm{\phi}} \mathcal{L}$ via quantum-native gradient calculation
    \STATE $\bm{\phi} \leftarrow \textsc{Optimizer}(\bm{\phi}, \nabla_{\bm{\phi}} \mathcal{L})$
    \ENDFOR
    \STATE \textbf{return} $\bm{\phi}^* \leftarrow \bm{\phi}$
  \end{algorithmic}
\end{algorithm}

\begin{algorithm}
  \caption{Test-Time Process for QEL}
  \label{alg:QEL_test}
  \begin{algorithmic}
    \REQUIRE Trained parameters $\bm{\phi}^*=\{\bm{\theta}_I^*,\bm{\theta}_F^*,\bm{w}^*\}$, test covariates $\bm{x}^{\text{test}}$, number of alternations $p$, number of shots $N_{\text{shots}}$
    \ENSURE Candidate solution $\bm{z}^*$
    \FOR{each edge $(i,j) \in E$}
    \STATE $\hat{y}_{ij} \leftarrow g_{\bm{w}^*}(\bm{x}^{\text{test}}_{ij})$
    \ENDFOR
    \STATE $\hat{\mathcal{H}}_F \leftarrow \sum_{i\in V}h_i^{\text{test}}Z_i+\sum_{(i,j)\in E} \hat{y}_{ij} Z_i Z_j$
    \STATE $|\psi\rangle \leftarrow |\bm{s}\rangle$
    \FOR{$k=1$ \textbf{to} $p$}
    \STATE $|\psi\rangle \leftarrow \exp(-i{\theta_F^{k,*}} \hat{\mathcal{H}}_F)|\psi\rangle$
    \STATE $|\psi\rangle \leftarrow \exp(-i{\theta_I^{k,*}} \mathcal{H}_I)|\psi\rangle$
    \ENDFOR
    \STATE Initialize frequency counter $\mathcal{F}$
    \FOR{$m=1$ \textbf{to} $N_{\text{shots}}$}
    \STATE $\bm{z}_m \sim |\langle \bm{z} | \psi \rangle|^2$
    \STATE $\mathcal{F}[\bm{z}_m] \leftarrow \mathcal{F}[\bm{z}_m] + 1$
    \ENDFOR
    \STATE $\bm{z}^* \leftarrow \arg\max_{\bm{z}} \mathcal{F}[\bm{z}]$
    \STATE \textbf{return} $\bm{z}^*$
  \end{algorithmic}
\end{algorithm}

\section{Proof of the Stationarity Convergence Guarantee for Joint QEL Training}
\label{app:stationary_convergence_qel}

In Theorem~\ref{prop:QEL_erm}, the objective in~\eqref{eq:QEL_loss} is shown to coincide with the empirical downstream task loss in~\eqref{eq:QEL_erm}.
Theorem~\ref{thm:qel_diminishing_step_stationarity} states a convergence-to-stationarity guarantee for the joint optimization of the QAOA parameters $\bm{\theta}$ and the contextual-encoder parameters $\bm{w}$ under a diminishing-step stochastic-gradient recursion driven by full-batch parameter-shift gradient estimates with a fixed measurement count.
This appendix introduces the technical setup, states the variational-circuit lemmas, and proves the convergence results in Theorem~\ref{thm:qel_diminishing_step_stationarity}.

\subsection{Objective and estimator.}
For each training pair $(\bm{x}^{(l)},\bm{y}^{(l)})$, define the per-instance QEL loss
\begin{equation}
\ell_l(\bm{\phi})
:=
\langle\psi(\bm{\phi},\bm{x}^{(l)})|\mathcal{H}_F(\bm{y}^{(l)})|\psi(\bm{\phi},\bm{x}^{(l)})\rangle.
\label{eq:appendixF_per_instance_loss}
\end{equation}
The QEL objective \eqref{eq:QEL_loss} is then represented as:
\begin{equation}
L(\bm{\phi})
=
\frac{1}{N}\sum_{l=1}^N \ell_l(\bm{\phi}).
\label{eq:appendixF_empirical_loss}
\end{equation}
By Theorem~\ref{prop:QEL_erm}, $L(\bm{\phi})$ is exactly the empirical downstream task loss of the stochastic policy induced by the final state in~\eqref{eq:QEL_final}. 
To analyze joint gradients, we use the gate decomposition in Appendix~\ref{apd:parametrization}. For each training instance $l$, the QEL ansatz can be written as a finite product of Pauli strings,
\begin{equation}
\mathcal{U}(\bm{\phi},\bm{x}^{(l)})
=
\prod_{q=1}^{Q}
\exp\!
\bigl[
-i\,\alpha_q^{(l)}(\bm{\phi}) P_q
\bigr],
\label{eq:appendixF_gate_decomposition}
\end{equation}
where each $P_q$ is a Pauli string satisfying $P_q^2=I$, and the effective angles $\alpha_q^{(l)}(\bm{\phi})$ collect the dependence on the QAOA parameters and on the contextual encoder through $g_{\bm{w}}(\bm{x}^{(l)}_{ij})$. This representation covers both the original parametrization in~\eqref{eq:QEL_ansatz} and the with-bias parametrization described in Appendix~\ref{apd:parametrization}.

For later use, define the angle-space loss
\begin{equation}
\widetilde{\ell}_l(\bm{a})
:=
\left\langle \bm{s} \middle|
\left(\prod_{q=1}^{Q} e^{-i a_q P_q}\right)^\dagger
\mathcal{H}_F(\bm{y}^{(l)})
\left(\prod_{q=1}^{Q} e^{-i a_q P_q}\right)
\middle| \bm{s} \right\rangle,
\qquad
\bm{a}=(a_1,\dots,a_Q)\in\mathbb{R}^Q,
\label{eq:appendixF_angle_space_loss}
\end{equation}
so that
\begin{equation}
\ell_l(\bm{\phi})
=
\widetilde{\ell}_l\bigl(\bm{\alpha}^{(l)}(\bm{\phi})\bigr),
\qquad
\bm{\alpha}^{(l)}(\bm{\phi})
:=
\bigl(\alpha_1^{(l)}(\bm{\phi}),\dots,\alpha_Q^{(l)}(\bm{\phi})\bigr).
\label{eq:appendixF_loss_def}
\end{equation}
Thus each per-instance loss $\ell_l$ is the composition of the angle-space variational objective $\widetilde{\ell}_l$ with the effective-angle map $\bm{\alpha}^{(l)}$.
For each $l$ and $q$, let $\bm{e}_q\in\mathbb{R}^Q$ denote the $q$th standard basis vector and define the shifted losses
\begin{equation}
\ell_{l,q}^{\pm}(\bm{\phi})
:=
\widetilde{\ell}_l\bigl(\bm{\alpha}^{(l)}(\bm{\phi}) \pm \tfrac{\pi}{4}\bm{e}_q\bigr).
\label{eq:appendixF_shifted_losses}
\end{equation}
At iteration \(t\), let \(\widehat{\ell}_{l,q}^{\pm}(\bm{\phi}_t)\) be the averages over \(M\) measurements of the loss evaluated on the shifted circuits obtained by replacing only the \(q\)-th effective gate angle \(\alpha_q^{(l)}(\bm{\phi}_t)\) with \(\alpha_q^{(l)}(\bm{\phi}_t)\pm \frac{\pi}{4}\), while keeping all other effective angles fixed.
Define the scalar shifted-difference estimator for the \(q\)th effective gate by
\begin{equation}
\widehat D_{l,q}(\bm{\phi}_t)
:=
\widehat{\ell}_{l,q}^{+}(\bm{\phi}_t)-\widehat{\ell}_{l,q}^{-}(\bm{\phi}_t),
\label{eq:appendixF_ps_estimator_definition}
\end{equation}
The full-batch finite-measurement gradient estimator is then
\begin{equation}
\bm{g}_t
:=
\frac{1}{N}\sum_{l=1}^{N}\sum_{q=1}^{Q}
\widehat D_{l,q}(\bm{\phi}_t)\nabla \alpha_q^{(l)}(\bm{\phi}_t).
\label{eq:appendixF_gradient_estimator_definition}
\end{equation}
The stochastic-gradient recursion analyzed below is
\begin{equation}
\bm{\phi}_{t+1} = \bm{\phi}_t - \eta_t \bm{g}_t.
\label{eq:appendixF_update_rule}
\end{equation}

\subsection{Regularity conditions.}
We work under the following conditions.

\noindent\textbf{(R1) Compact containment.}
There exists a compact convex set $\mathcal{K} \subset \mathbb{R}^{d_{\bm{\theta}}+d_{\bm{w}}}$ such that the iterates $\bm{\phi}_t=(\bm{\theta}_t,\bm{w}_t)$ remain in $\mathcal{K}$ almost surely for all $t$.

\medskip
\noindent\textbf{(R2) Effective-angle regularity.}
Each effective angle $\alpha_q^{(l)}$ defined through the Pauli-string gate decomposition in~\eqref{eq:appendixF_gate_decomposition} is $C^2$ on an open neighborhood of $\mathcal K$ for all $l=1,\dots,N$ and $q=1,\dots,Q$.
In particular, each $\alpha_q^{(l)}$ is continuously differentiable on that neighborhood.

\medskip
\noindent\textbf{(R3) Boundedness.}
For the final Hamiltonians $\mathcal{H}_F(\bm{y}^{(l)})$ defined in Section~\ref{sec:processes}, there exists $H_{\max}>0$ such that
\[
\|\mathcal{H}_F(\bm{y}^{(l)})\|_{\mathrm{op}} \le H_{\max}
\qquad \text{for all } l=1,\dots,N.
\]

\medskip
\noindent\textbf{(R4) Conditional sampling unbiasedness.}
Let $\mathcal{F}_t$ denote the sigma-algebra generated by the random initialization and all measurement outcomes revealed up to iteration $t-1$.
Conditional on $\mathcal{F}_t$, each shifted average $\widehat{\ell}_{l,q}^{\pm}(\bm{\phi}_t)$ is formed from a finite number $M\in\mathbb{N}$ of conditionally independent and identically distributed measurement outcomes with conditional mean $\ell_{l,q}^{\pm}(\bm{\phi}_t)$.

We note that \textup{(R3)} implies that every computational-basis cost is bounded:
\begin{equation}
|f(\bm{z},\bm{y}^{(l)})|
=
|\langle\bm{z}|\mathcal{H}_F(\bm{y}^{(l)})|\bm{z}\rangle|
\le
H_{\max}.
\label{eq:appendixF_measurement_outcome_bound}
\end{equation}
Finally, since each $P_q$ in~\eqref{eq:appendixF_gate_decomposition} is a Pauli string, $P_q=P_q^\dagger$, $P_q^2=I$, and $\|P_q\|_{\mathrm{op}}=1$, the two-point parameter-shift identity applies to each effective gate.

\subsection{Lemmas}
The next lemmas collect the smoothness, effective-angle, parameter-shift, and finite-measurement facts used in the proof.
To keep the appendix concise, we state them in our notation and give the specialization arguments relevant to our model, while referring to the standard quantum-gradient literature for the parameter-shift identities.

\begin{lemma}[Bounded effective-angle gradients]
\label{lem:appendixF_effective_angle_gradient_bound}
Under \textup{(R1)}--\textup{(R2)}, there exists a constant $A_1>0$ such that
\[
\|\nabla \alpha_q^{(l)}(\bm{\phi})\|_2 \le A_1
\qquad
\text{for all } \bm{\phi}\in\mathcal{K},\ l=1,\dots,N,\ q=1,\dots,Q.
\]
\end{lemma}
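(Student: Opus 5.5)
The plan is to prove Lemma~\ref{lem:appendixF_effective_angle_gradient_bound} by combining continuity of the gradient with compactness, then taking a maximum over finitely many indices. By \textup{(R2)}, each effective angle $\alpha_q^{(l)}$ is $C^2$ on an open neighborhood $\mathcal{O}$ of $\mathcal{K}$. In particular it is $C^1$ there, so the gradient map $\bm{\phi}\mapsto\nabla\alpha_q^{(l)}(\bm{\phi})$ is continuous on $\mathcal{O}\supset\mathcal{K}$. Composing with the continuous Euclidean norm, the function $\bm{\phi}\mapsto\|\nabla\alpha_q^{(l)}(\bm{\phi})\|_2$ is a continuous real-valued function on $\mathcal{K}$.

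Since $\mathcal{K}$ is compact by \textup{(R1)}, the extreme value theorem gives a finite constant
\[
A_{l,q}:=\max_{\bm{\phi}\in\mathcal{K}}\|\nabla\alpha_q^{(l)}(\bm{\phi})\|_2<\infty
\]
for each pair $(l,q)$. If $\mathcal{K}$ were empty the claim would be vacuous, so we may assume it is nonempty.

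The index set $\{1,\dots,N\}\times\{1,\dots,Q\}$ is finite: $N$ is the fixed training-set size, and $Q$ is the fixed number of Pauli-string gates in the decomposition~\eqref{eq:appendixF_gate_decomposition}. The number of gates does not depend on $\bm{\phi}$, because the circuit architecture (with $p$ alternations and the gates indexed by $V$ and $E$) is fixed. We therefore set
\[
A_1:=\max\Bigl\{1,\ \max_{l,q}A_{l,q}\Bigr\}.
\]
Including $1$ inside the maximum ensures $A_1>0$ even if every gradient vanishes. This $A_1$ is finite and satisfies the required bound uniformly in $\bm{\phi}$, $l$ and $q$.

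There is no real obstacle here. The only points needing care are checking that continuity holds on all of $\mathcal{K}$ rather than only on its interior, which the open-neighborhood clause in \textup{(R2)} provides, and the positivity convention for $A_1$. As a side remark, for concrete encoders one could make $A_1$ explicit. For the with-bias parametrization, $\alpha=\theta_F^{k,1}g_{\bm{w}}(\bm{x}_{ij}^{(l)})$, and its gradient involves $g_{\bm{w}}$ and $\theta\nabla_{\bm{w}}g_{\bm{w}}$; both the linear and logistic encoders are smooth in $\bm{w}$, so these terms are bounded on compact sets. The abstract compactness argument, however, is all the lemma requires.
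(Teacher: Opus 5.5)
Your proof is correct and follows the paper's argument: continuity of $\nabla\alpha_q^{(l)}$ on the open neighborhood from \textup{(R2)}, finiteness of the supremum over the compact set $\mathcal{K}$, and a maximum over the finitely many pairs $(l,q)$. The only cosmetic difference is that the paper sets $A_1 := 1+\max_{l,q}\sup_{\mathcal{K}}\|\nabla\alpha_q^{(l)}\|_2$ rather than taking $\max\{1,\cdot\}$; both choices guarantee $A_1>0$.
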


\begin{proof}
By \textup{(R2)}, each $\nabla\alpha_q^{(l)}$ is continuous on an open neighborhood of $\mathcal{K}$.
Since there are finitely many pairs $(l,q)$, the quantity
\[
A_1
:=
1+
\max_{1\le l\le N,\ 1\le q\le Q}
\sup_{\bm{\phi}\in\mathcal{K}}
\|\nabla\alpha_q^{(l)}(\bm{\phi})\|_2
\]
is finite, which gives the claimed uniform bound.
\end{proof}

Lemma~\ref{lem:appendixF_effective_angle_gradient_bound} records the compactness bound on the effective-angle gradients used in the finite-measurement error analysis.

\begin{lemma}[$\beta$-smoothness]
\label{lem:appendixF_smoothness}
Under the Pauli-string gate decomposition in~\eqref{eq:appendixF_gate_decomposition} and \textup{(R1)}--\textup{(R3)}, there exists a constant $\beta>0$ such that
\[
\|\nabla L(\bm{\phi})-\nabla L(\bm{v})\|_2
\le
\beta\|\bm{\phi}-\bm{v}\|_2
\qquad \text{for all } \bm{\phi},\bm{v}\in\mathcal{K}.
\]
\end{lemma}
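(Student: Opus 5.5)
The plan is to use the composition structure $\ell_l(\bm{\phi})=\widetilde{\ell}_l(\bm{\alpha}^{(l)}(\bm{\phi}))$ from \eqref{eq:appendixF_loss_def} and the chain rule. Write
\[
\nabla\ell_l(\bm{\phi})=\sum_{q=1}^Q \partial_q\widetilde{\ell}_l\bigl(\bm{\alpha}^{(l)}(\bm{\phi})\bigr)\,\nabla\alpha_q^{(l)}(\bm{\phi}).
\]
The goal is to show that each $\nabla\ell_l$ is Lipschitz on $\mathcal{K}$ and then average over $l$, which gives $\beta=\max_l\beta_l$ or any upper bound on these constants. The argument splits into a quantum part (angle-space smoothness) and a classical part (effective-angle smoothness).

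\textbf{Angle space.} I will show that $\widetilde{\ell}_l$ is $C^\infty$ on all of $\mathbb{R}^Q$ with globally bounded first and second derivatives. Since $P_q^2=I$, each gate satisfies $e^{-ia_qP_q}=\cos a_q\, I - i\sin a_q\, P_q$. Hence $\widetilde{\ell}_l$ is a trigonometric polynomial of degree at most $2$ in each $a_q$.

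Differentiating a gate gives $\partial_{a_q}e^{-ia_qP_q}=-iP_qe^{-ia_qP_q}$. Therefore $\partial_q\widetilde{\ell}_l=\langle\bm{s}|\,\cdots i[P_q,\cdot]\cdots\,|\bm{s}\rangle$, a commutator expression whose size is bounded by $2\|P_q\|_{\mathrm{op}}\|\mathcal{H}_F(\bm{y}^{(l)})\|_{\mathrm{op}}\le 2H_{\max}$, using unitarity and (R3). Applying the same argument once more gives $|\partial_q\partial_r\widetilde{\ell}_l|\le 4H_{\max}$. Alternatively, both bounds follow from the parameter-shift identity with $\pi/4$ shifts applied twice.

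This yields two facts: $\|\nabla\widetilde{\ell}_l\|_\infty\le 2H_{\max}$, and $\nabla\widetilde{\ell}_l$ is $4QH_{\max}$-Lipschitz in $\ell_2$ (via the Frobenius bound on the Hessian).

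\textbf{Effective angles.} By (R2), each $\alpha_q^{(l)}$ is $C^2$ on an open neighborhood of the compact set $\mathcal{K}$. So, as in Lemma~\ref{lem:appendixF_effective_angle_gradient_bound}, $\|\nabla\alpha_q^{(l)}\|_2\le A_1$ on $\mathcal{K}$, and the Hessians are bounded by some constant $A_2$. Because $\mathcal{K}$ is convex by (R1), the mean value inequality along segments gives two Lipschitz bounds on $\mathcal{K}$: $\nabla\alpha_q^{(l)}$ is $A_2$-Lipschitz, and $\bm{\alpha}^{(l)}$ is $\sqrt{Q}A_1$-Lipschitz.

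\textbf{Combining.} For $\bm{\phi},\bm{v}\in\mathcal{K}$, add and subtract to split $\nabla\ell_l(\bm{\phi})-\nabla\ell_l(\bm{v})$ into two sums:
\[
\sum_q\bigl[\partial_q\widetilde{\ell}_l(\bm{\alpha}(\bm{\phi}))-\partial_q\widetilde{\ell}_l(\bm{\alpha}(\bm{v}))\bigr]\nabla\alpha_q(\bm{\phi})
\;+\;
\sum_q\partial_q\widetilde{\ell}_l(\bm{\alpha}(\bm{v}))\bigl[\nabla\alpha_q(\bm{\phi})-\nabla\alpha_q(\bm{v})\bigr].
\]
The first sum is at most $Q\cdot A_1\cdot 4QH_{\max}\cdot\sqrt{Q}A_1\|\bm{\phi}-\bm{v}\|_2$. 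The second is at most $Q\cdot 2H_{\max}\cdot A_2\|\bm{\phi}-\bm{v}\|_2$. Together these give an explicit $\beta_l$, and averaging over $l$ preserves the constant.

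\textbf{Main obstacle.} The delicate step is where convexity and the open-neighborhood assumption are actually needed. The mean value argument requires segments between points of $\mathcal{K}$ to stay inside the region where the $C^2$ bounds hold, and (R1) supplies this through convexity. If needed, I would replace $\mathcal{K}$ by a compact convex neighborhood contained in the open set from (R2) to obtain uniform Hessian bounds. The angle-space derivative bounds are routine once the Pauli-gate structure is used.
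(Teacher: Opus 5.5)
Your proposal is correct and follows the same route as the paper: write $\ell_l=\widetilde{\ell}_l\circ\bm{\alpha}^{(l)}$, combine angle-space smoothness of $\widetilde{\ell}_l$ with the $C^2$ bounds on the effective angles over the compact set $\mathcal{K}$ via the chain rule, and average over $l$. The only difference is that the paper cites Theorem~14 of \citet{liu2025stochastic} for the angle-space bounds, whereas you derive them directly from $P_q^2=I$ (first- and second-derivative bounds $2H_{\max}$ and $4H_{\max}$ via commutators or parameter shifts) and make the convexity-based mean-value step and the resulting constants explicit, which makes your argument self-contained.
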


\begin{proof}
By~\eqref{eq:appendixF_loss_def},
\[
    \ell_l(\bm{\phi})
    =
    \widetilde{\ell}_l\bigl(\bm{\alpha}^{(l)}(\bm{\phi})\bigr).
\]
Since each gate generator is a Pauli string and \textup{(R3)} bounds the corresponding Hamiltonian observable, Theorem~14 of~\citet{liu2025stochastic} gives the required angle-space smoothness bounds for $\widetilde{\ell}_l$.
By \textup{(R2)}, $\bm{\alpha}^{(l)}$ is $C^2$ on an open neighborhood of $\mathcal{K}$, so its first and second derivatives are bounded on $\mathcal{K}$.
The multivariate chain rule therefore transfers the angle-space smoothness of $\widetilde{\ell}_l$ to smoothness of $\ell_l$ with respect to the joint parameter vector $\bm{\phi}=(\bm{\theta},\bm{w})$ on $\mathcal{K}$.
Since $L$ is the finite average of the per-instance losses, the same property holds.
\end{proof}

Because Theorem~\ref{prop:QEL_erm} identifies $L(\bm{\theta},\bm{w})$ with the empirical downstream task loss, Lemma~\ref{lem:appendixF_smoothness} is equivalently a smoothness statement for the empirical-risk objective optimized by joint QEL training.

\begin{lemma}[Exact joint gradients and finite-measurement estimator]
\label{lem:appendixF_unbiased_estimator}
Under the Pauli-string gate decomposition in~\eqref{eq:appendixF_gate_decomposition} and \textup{(R1)}--\textup{(R4)}, the empirical task loss gradient admits the decomposition
\begin{equation}
\nabla L(\bm{\phi})
=
\frac{1}{N}\sum_{l=1}^{N}
J_{\bm{\alpha}^{(l)}}(\bm{\phi})^{\top}
\nabla \widetilde{\ell}_l\bigl(\bm{\alpha}^{(l)}(\bm{\phi})\bigr)
=
\frac{1}{N}\sum_{l=1}^{N}\sum_{q=1}^{Q}
\frac{\partial \widetilde{\ell}_l}{\partial a_q}
\bigl(\bm{\alpha}^{(l)}(\bm{\phi})\bigr)
\nabla \alpha_q^{(l)}(\bm{\phi}),
\label{eq:appendixF_chain_rule}
\end{equation}
where $J_{\bm{\alpha}^{(l)}}(\bm{\phi})$ denotes the Jacobian of $\bm{\alpha}^{(l)}$. For each $l$ and $q$,
\begin{equation}
\frac{\partial \widetilde{\ell}_l}{\partial a_q}
\bigl(\bm{\alpha}^{(l)}(\bm{\phi})\bigr)
=
\ell_{l,q}^{+}(\bm{\phi})-\ell_{l,q}^{-}(\bm{\phi}).
\label{eq:appendixF_parameter_shift_exact}
\end{equation}
Consequently,
\[
\nabla L(\bm{\phi})
=
\frac{1}{N}\sum_{l=1}^{N}\sum_{q=1}^{Q}
\bigl(\ell_{l,q}^{+}(\bm{\phi})-\ell_{l,q}^{-}(\bm{\phi})\bigr)
\nabla \alpha_q^{(l)}(\bm{\phi}).
\]
Moreover, the estimator in~\eqref{eq:appendixF_gradient_estimator_definition} satisfies
\begin{equation}
\mathbb E[\bm{g}_t \mid \mathcal{F}_t] = \nabla L(\bm{\phi}_t)
\label{eq:appendixF_conditional_unbiasedness}
\end{equation}
Furthermore, under the $A_1$-bound supplied by Lemma~\ref{lem:appendixF_effective_angle_gradient_bound}, there exists a constant $\sigma^2>0$ such that
\begin{equation}
\mathbb E\!\left[\|\bm{g}_t-\nabla L(\bm{\phi}_t)\|_2^2 \mid \mathcal{F}_t\right]
\le
\frac{\sigma^2}{M}
\qquad \text{for all } t.
\label{eq:appendixF_conditional_second_moment_bound}
\end{equation}
\end{lemma}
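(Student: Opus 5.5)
The proof has four parts: the chain rule, the parameter-shift identity, conditional unbiasedness, and the variance bound.

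\emph{Chain rule.} By (R2) each $\bm{\alpha}^{(l)}$ is $C^1$ near $\mathcal{K}$. The angle-space loss $\widetilde{\ell}_l$ is a trigonometric polynomial in $\bm{a}$, since each factor satisfies $e^{-ia_qP_q}=\cos a_q\, I - i\sin a_q\, P_q$ because $P_q^2=I$. Hence $\widetilde{\ell}_l$ is smooth, and the multivariate chain rule applied to $\ell_l=\widetilde{\ell}_l\circ\bm{\alpha}^{(l)}$ gives the first display. Averaging over $l$ gives \eqref{eq:appendixF_chain_rule}.

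\emph{Parameter-shift identity \eqref{eq:appendixF_parameter_shift_exact}.} Fix all angles except $a_q$. Write the circuit as $V_2 e^{-ia_qP_q}V_1$ and substitute the cosine--sine form. This gives
\[
\widetilde{\ell}_l = A\cos^2 a_q + B\sin^2 a_q + C\sin a_q\cos a_q
\]
for $a_q$-independent real constants $A,B,C$. Equivalently, $\widetilde{\ell}_l = c_0 + c_1\cos 2a_q + c_2\sin 2a_q$. Then
\[
\partial_{a_q}\widetilde{\ell}_l = -2c_1\sin 2a_q + 2c_2\cos 2a_q .
\]
The shifts $a_q\pm\pi/4$ move $2a_q$ by $\pm\pi/2$, so
\[
\widetilde{\ell}_l(a_q+\pi/4)-\widetilde{\ell}_l(a_q-\pi/4) = -2c_1\sin 2a_q + 2c_2\cos 2a_q .
\]
This is exactly the derivative, with coefficient $1$, which matches the convention in \eqref{eq:appendixF_shifted_losses}. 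Alternatively one can cite the standard two-point rule of \citet{schuld2019evaluating, wierichs2022general}. Substituting into \eqref{eq:appendixF_chain_rule} yields the exact gradient formula.

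\emph{Conditional unbiasedness.} Conditional on $\mathcal{F}_t$, $\bm{\phi}_t$ is fixed, and hence so are the vectors $\nabla\alpha_q^{(l)}(\bm{\phi}_t)$. By (R4), $\mathbb{E}[\widehat{\ell}^{\pm}_{l,q}\mid\mathcal{F}_t]=\ell^{\pm}_{l,q}(\bm{\phi}_t)$. Linearity of conditional expectation applied to \eqref{eq:appendixF_gradient_estimator_definition}, together with the exact formula, gives \eqref{eq:appendixF_conditional_unbiasedness}.

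\emph{Variance bound.} This is the step needing the most care, mainly in bookkeeping the dependence structure. Write the error as
\[
\bm{g}_t-\nabla L = \frac1N\sum_{l,q}\bigl(\widehat D_{l,q}-D_{l,q}\bigr)\nabla\alpha_q^{(l)} .
\]
We avoid assuming independence across $(l,q)$ or between the $\pm$ circuits by using Cauchy--Schwarz:
\[
\|\bm{g}_t-\nabla L\|_2^2 \le \frac{Q}{N}\sum_{l,q}\bigl(\widehat D_{l,q}-D_{l,q}\bigr)^2 A_1^2 ,
\]
using Lemma~\ref{lem:appendixF_effective_angle_gradient_bound} and (R1), since $\bm{\phi}_t\in\mathcal{K}$. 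The total factor is $(NQ)/N^2\cdot N = Q$ from the count of terms. Next, $(u-v)^2\le 2u^2+2v^2$ gives
\[
\bigl(\widehat D_{l,q}-D_{l,q}\bigr)^2 \le 2\bigl(\widehat{\ell}^+_{l,q}-\ell^+_{l,q}\bigr)^2 + 2\bigl(\widehat{\ell}^-_{l,q}-\ell^-_{l,q}\bigr)^2 .
\]
Each $\widehat{\ell}^{\pm}_{l,q}$ is an average of $M$ conditionally i.i.d.\ outcomes bounded by $H_{\max}$ via \eqref{eq:appendixF_measurement_outcome_bound}. Its conditional variance is therefore at most $H_{\max}^2/M$. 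Combining,
\[
\mathbb{E}\bigl[\|\bm{g}_t-\nabla L\|^2\mid\mathcal{F}_t\bigr] \le 4Q^2A_1^2H_{\max}^2/M .
\]
Setting $\sigma^2:=4Q^2A_1^2H_{\max}^2$ gives \eqref{eq:appendixF_conditional_second_moment_bound}. This constant is uniform in $t$ because $A_1$, $Q$ and $H_{\max}$ do not depend on the iterate.
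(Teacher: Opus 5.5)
Your proposal is correct and follows the paper's proof essentially step for step: chain rule, exact two-point parameter shift, conditional unbiasedness from \textup{(R4)} plus linearity, and a per-shift conditional variance of at most $H_{\max}^2/M$ aggregated with the $A_1$ bound, arriving at the same constant $\sigma^2=4Q^2A_1^2H_{\max}^2$. The differences are minor: you derive the unit-coefficient $\pm\pi/4$ shift rule for $e^{-ia_qP_q}$ explicitly rather than citing it, which usefully confirms the paper's convention, and you aggregate with pointwise Cauchy--Schwarz where the paper uses the $L^2$ triangle inequality, both without any independence assumption (your aside that the ``total factor is $Q$'' is garbled, since the prefactor is $Q/N$ and the $NQ$ summands yield $Q^2$, but your final bound is right).
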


\begin{proof}
Equation~\eqref{eq:appendixF_parameter_shift_exact} is the standard exact parameter-shift identity for single-parameter gates with two-point spectrum $\{\pm 1\}$; see~\citet{schuld2019evaluating,wierichs2022general}. Applying the multivariate chain rule to
\[
\ell_l(\bm{\phi})=\widetilde{\ell}_l\bigl(\bm{\alpha}^{(l)}(\bm{\phi})\bigr)
\]
gives~\eqref{eq:appendixF_chain_rule}.
Conditioning on $\mathcal F_t$ fixes $\bm{\phi}_t$ and the factors $\nabla\alpha_q^{(l)}(\bm{\phi}_t)$; by \textup{(R4)}, the only remaining randomness is the fresh measurement randomness.

By \textup{(R4)}, the scalar estimator in~\eqref{eq:appendixF_ps_estimator_definition} satisfies
\[
\mathbb E[\widehat D_{l,q}(\bm{\phi}_t)\mid\mathcal F_t]
=
\ell_{l,q}^{+}(\bm{\phi}_t)-\ell_{l,q}^{-}(\bm{\phi}_t).
\]
This is the finite-measurement unbiasedness used in parameter-shift estimators; see \citet{sweke2020stochastic,mari2021estimating}.
Substituting this identity into~\eqref{eq:appendixF_gradient_estimator_definition} and using linearity of conditional expectation and~\eqref{eq:appendixF_parameter_shift_exact} gives~\eqref{eq:appendixF_conditional_unbiasedness}.

Moreover, by \eqref{eq:appendixF_measurement_outcome_bound}, each shifted average over \(M\) measurements has conditional variance at most $\frac{H_{\max}^2}{M}$.
Therefore,
\[
\mathbb E\!\left[
\left(
\widehat D_{l,q}(\bm{\phi}_t)
-
\bigl(\ell_{l,q}^{+}(\bm{\phi}_t)-\ell_{l,q}^{-}(\bm{\phi}_t)\bigr)
\right)^2
\middle|\mathcal F_t
\right]
\le
\frac{4H_{\max}^2}{M}.
\]
Together with the bound $\|\nabla\alpha_q^{(l)}(\bm{\phi}_t)\|_2\le A_1$ from Lemma~\ref{lem:appendixF_effective_angle_gradient_bound}, this yields
\[
\begin{aligned}
\mathbb E\!\left[\|\bm g_t-\nabla L(\bm{\phi}_t)\|_2^2\mid\mathcal F_t\right] \le
\frac{1}{N^2}
\left(
\sum_{l=1}^N\sum_{q=1}^Q
\|\nabla\alpha_q^{(l)}(\bm{\phi}_t)\|_2
\frac{2H_{\max}}{\sqrt M}
\right)^2
\le
\frac{4Q^2A_1^2H_{\max}^2}{M},
\end{aligned}
\]
so~\eqref{eq:appendixF_conditional_second_moment_bound} holds with, for instance, $\sigma^2=4Q^2A_1^2H_{\max}^2$.
\end{proof}

Lemma~\ref{lem:appendixF_unbiased_estimator} separates the deterministic chain-rule and parameter-shift identities from the finite-measurement stochastic error.
It is the step that connects the implementable gradient estimator in~\eqref{eq:appendixF_gradient_estimator_definition} to the exact gradient used in the stochastic-approximation proof below.

\subsection{Proof of Theorem~\ref{thm:qel_diminishing_step_stationarity}}
\begin{proof}
Let
\[
\bm{\phi}_t := (\bm{\theta}_t,\bm{w}_t),
\qquad
L_t := L(\bm{\phi}_t),
\qquad
\nabla_t := \nabla L(\bm{\phi}_t),
\]
and define
\[
\bar E_0
:=
\frac{1}{N}\sum_{l=1}^N
\lambda_{\min}\!\bigl(\mathcal{H}_F(\bm{y}^{(l)})\bigr).
\]
For each training instance $l$,
\[
\ell_l(\bm{\phi}_t)
=
\langle\psi(\bm{\phi}_t,\bm{x}^{(l)})|
\mathcal{H}_F(\bm{y}^{(l)})
|\psi(\bm{\phi}_t,\bm{x}^{(l)})\rangle.
\]
Since each \(\mathcal{H}_F(\bm{y}^{(l)})\) is Hermitian by construction, the variational principle in~\eqref{eq:variational} gives
\[
    \ell_l(\bm{\phi})
    \ge
    \lambda_{\min}\!\bigl(\mathcal{H}_F(\bm{y}^{(l)})\bigr),
    \qquad \forall \bm{\phi}.
\]
Applying this bound to the stochastic iterate \(\bm{\phi}_t\) and averaging over \(l\) gives
\[
    L_t=
    \frac{1}{N}\sum_{l=1}^N \ell_l(\bm{\phi}_t)
    \ge
    \bar E_0.
\]
Thus $Y_t:=L_t-\bar E_0\ge0$ almost surely for every $t$.

By Lemma~\ref{lem:appendixF_smoothness}, $L$ is $\beta$-smooth on $\mathcal{K}$. Hence, for all iterates in $\mathcal K$, the descent lemma gives
\[
L_{t+1}
\le
L_t
+
\langle \nabla_t,\bm{\phi}_{t+1}-\bm{\phi}_t\rangle
+
\frac{\beta}{2}\|\bm{\phi}_{t+1}-\bm{\phi}_t\|_2^2.
\]
Substituting~\eqref{eq:appendixF_update_rule} yields
\[
L_{t+1}
\le
L_t
-
\eta_t\langle \nabla_t,\bm{g}_t\rangle
+
\frac{\beta\eta_t^2}{2}\|\bm{g}_t\|_2^2.
\]
Taking conditional expectation with respect to $\mathcal{F}_t$ and using~\eqref{eq:appendixF_conditional_unbiasedness},
\[
\mathbb E[L_{t+1}\mid \mathcal{F}_t]
\le
L_t
-
\eta_t\|\nabla_t\|_2^2
+
\frac{\beta\eta_t^2}{2}\mathbb E[\|\bm{g}_t\|_2^2\mid \mathcal{F}_t].
\]
Writing $\bm{g}_t=\nabla_t+(\bm{g}_t-\nabla_t)$, the cross term vanishes under conditional expectation by~\eqref{eq:appendixF_conditional_unbiasedness}. Therefore,
\[
\mathbb E[\|\bm{g}_t\|_2^2\mid \mathcal{F}_t]
=
\|\nabla_t\|_2^2
+
\mathbb E[\|\bm{g}_t-\nabla_t\|_2^2\mid \mathcal{F}_t].
\]
Applying~\eqref{eq:appendixF_conditional_second_moment_bound} gives
\[
\mathbb E[L_{t+1}\mid \mathcal{F}_t]
\le
L_t
-
\eta_t\Bigl(1-\frac{\beta\eta_t}{2}\Bigr)\|\nabla_t\|_2^2
+
\frac{\beta\eta_t^2\sigma^2}{2M}.
\]
By the step-size assumption in Theorem~\ref{thm:qel_diminishing_step_stationarity}, $1-\frac{\beta\eta_t}{2}\ge \frac{1}{2}$, hence
\begin{equation}
\mathbb E[L_{t+1}\mid \mathcal{F}_t]
\le
L_t
-
\frac{\eta_t}{2}\|\nabla_t\|_2^2
+
\frac{\beta\eta_t^2\sigma^2}{2M}.
\label{eq:appendixF_diminishing_conditional_descent}
\end{equation}

Since the shifted averages are bounded by $H_{\max}$ and Lemma~\ref{lem:appendixF_effective_angle_gradient_bound} gives $\|\nabla \alpha_q^{(l)}(\bm{\phi})\|_2\le A_1$ on $\mathcal{K}$, the estimator is almost surely bounded,
\[
\|\bm{g}_t\|_2
\le
\frac{1}{N}\sum_{l=1}^{N}\sum_{q=1}^{Q}
|\widehat D_{l,q}(\bm{\phi}_t)|\,\|\nabla\alpha_q^{(l)}(\bm{\phi}_t)\|_2
\le
2QH_{\max}A_1
=:
G_{\max}.
\]
If $G_{\max}=0$, then $\bm g_t=0$ almost surely for every $t$, and conditional unbiasedness implies $\nabla_t=0$ almost surely for every $t$. The claim is immediate in this degenerate case, so assume $G_{\max}>0$.

From~\eqref{eq:appendixF_diminishing_conditional_descent},
\[
\mathbb E[Y_{t+1}\mid\mathcal F_t]
\le
Y_t
-
\frac{\eta_t}{2}\|\nabla_t\|_2^2
+
\frac{\beta\eta_t^2\sigma^2}{2M}.
\]
Since \(Y_t\ge0\) and \(\sum_t \frac{\beta\eta_t^2\sigma^2}{2M}<\infty\), the Robbins--Siegmund supermartingale convergence theorem \citep{robbins1971convergence} implies
\begin{equation}
\sum_{t=0}^{\infty}\eta_t\|\nabla_t\|_2^2<\infty
\qquad\text{almost surely}.
\label{eq:appendixF_weighted_gradient_summable_as}
\end{equation}
On the same probability-one event, suppose for contradiction that $\|\nabla_t\|_2$ does not converge to zero.
Then there exists $\varepsilon>0$ and infinitely many indices at which $\|\nabla_t\|_2\ge 2\varepsilon$. Since $\sum_t\eta_t=\infty$, relation~\eqref{eq:appendixF_weighted_gradient_summable_as} also implies that $\|\nabla_t\|_2<\varepsilon$ infinitely often.
Thus there are infinitely many disjoint excursions that start at an index $r$ with $\|\nabla_r\|_2\ge 2\varepsilon$ and end at the first subsequent index $s>r$ with $\|\nabla_s\|_2<\varepsilon$.

For each such excursion, smoothness of $L$ and the estimator bound give
\[
\bigl|\|\nabla_{t+1}\|_2-\|\nabla_t\|_2\bigr|
\le
\|\nabla_{t+1}-\nabla_t\|_2
\le
\beta\|\bm{\phi}_{t+1}-\bm{\phi}_t\|_2
\le
\beta G_{\max}\eta_t.
\]
Therefore the cumulative step length during the excursion must satisfy
\[
\sum_{t=r}^{s-1}\eta_t
\ge
\frac{\varepsilon}{\beta G_{\max}}.
\]
Moreover, by the definition of $s$, $\|\nabla_t\|_2\ge\varepsilon$ for every $t=r,\ldots,s-1$. Hence the excursion contributes at least
\[
\sum_{t=r}^{s-1}\eta_t\|\nabla_t\|_2^2
\ge
\varepsilon^2\sum_{t=r}^{s-1}\eta_t
\ge
\frac{\varepsilon^3}{\beta G_{\max}}
\]
to the series in~\eqref{eq:appendixF_weighted_gradient_summable_as}.
Infinitely many disjoint excursions would force that series to diverge, contradicting~\eqref{eq:appendixF_weighted_gradient_summable_as}.
Therefore, in the almost-sure sense,
\[
    \mathbb P\!\left(
        \lim_{t\to\infty}\|\nabla L(\bm{\phi}_t)\|_2 = 0
    \right)=1.
\]

By compact containment in \textup{(R1)} and continuity of $\nabla L$ on $\mathcal K$, there exists $C_\nabla<\infty$ such that
\[
\|\nabla L(\bm{\phi})\|_2^2\le C_\nabla
\qquad\text{for all } \bm{\phi}\in\mathcal K.
\]
Since $\bm{\phi}_t\in\mathcal K$ almost surely, the sequence $\|\nabla L(\bm{\phi}_t)\|_2^2$ is dominated by $C_\nabla$. The almost-sure convergence established above and the dominated convergence theorem imply
\[
\lim_{t\to\infty}
\mathbb E\!\left[\|\nabla L(\bm{\phi}_t)\|_2^2\right]
=0.
\]
This completes the proof.
\end{proof}

\begin{remark}[Interpretation and scope]
Theorem~\ref{thm:qel_diminishing_step_stationarity} is a convergence-to-stationarity guarantee for the joint optimization of $\bm{\theta}$ and $\bm{w}$ under the stochastic gradient recursion and the parameter-shift rule with a fixed finite number of measurements per shifted circuit and the regularity conditions, including the compact containment assumption (R1).
It does not assert convergence of the parameters themselves, nor convergence to a global minimizer of the full nonconvex landscape.
Instead, the theorem proves vanishing expected squared gradient norm and almost-sure stationarity of the full stochastic trajectory under the regularity conditions,
\[
\lim_{t\to\infty}
\mathbb E\!\left[\|\nabla L(\bm{\phi}_t)\|_2^2\right]
=0,
\qquad
\mathbb P\!\left(
            \lim_{t\to\infty}\|\nabla L(\bm{\phi}_t)\|_2 = 0
        \right)=1.
\]
Since Theorem~\ref{prop:QEL_erm} identifies $L(\bm{\phi})$ with the empirical downstream task loss, this guarantee applies to the empirical expected task loss of the stochastic policy optimized by QEL, rather than to a surrogate prediction loss.
\end{remark}

\section{Implementation Details}
\label{apd:implementation}

\subsection{Hardware and Software}
All numerical experiments are performed on a machine with 256GB RAM, Intel Xeon Gold 5317 CPUs (24 cores total), and 4 NVIDIA GeForce RTX 4090 GPUs with Python 3.12.12.
Quantum circuits are constructed using PennyLane (version 0.43.1) \citep{bergholm2018pennylane}, with the PennyLane Lightning GPU simulator (version 0.43.0).
To calculate the true optimal solutions for measuring decision performance, Gurobi (version 12.0.3) \citep{gurobi} is used as the optimization solver for MaxCut and QAP.
For BMP, we use CVXPY (version 1.8.2) \citep{diamond2016cvxpy, agrawal2018rewriting} to follow \citep{geng2024benchmarking}.
For List-LTR and LODL, the corresponding optimization solver is also repeatedly invoked during training.
List-LTR uses it to initialize and update the solution cache, while LODL uses it to obtain optimal decisions for sampled perturbations when learning surrogate losses.
Classical benchmarks are implemented using PyTorch (version 2.9.1) \cite{paszke2019pytorch} in accordance with the original study.

\subsection{Data Generation}
For MaxCut and QAP, we generate synthetic datasets with $N=512$ instances.
For MaxCut, covariates $\bm{x}_{ij} \in \mathbb{R}^2$ are sampled uniformly from $[-2.048, 2.048]^2$ for each edge $(i,j) \in E$.
The uncertain edge weights $y_{ij}$ are generated by applying the Rosenbrock function \citep{rosenbrock1960automatic} to the covariates, followed by a log-transformation $\log(1+\cdot)$ and additive Gaussian noise with standard deviation equal to 10\% of the signal standard deviation.
For QAP, covariates $\bm{x}_{ij} \in \mathbb{R}^2$ are sampled uniformly from $[-2, 2]^2$ for each facility-location pair.
The uncertain flow matrix entries $y_{ij}$ are generated using the Goldstein-Price function \citep{goldsteinprice} with the same log-transformation and noise injection procedure, then symmetrized.
The distance matrix $D$ is sampled uniformly from $[0, 1]$ and symmetrized independently for each instance.
These benchmark functions are chosen for their nonlinear, multimodal characteristics that create challenging optimization tasks while maintaining reproducibility.
For BMP, we follow the benchmark design in \citep{geng2024benchmarking}, constructing each instance from the raw Cora citation graph by sampling two disjoint node sets of equal size and forming all cross-partition candidate edges.
The Cora dataset is distributed via ICPSR/openICPSR under the Creative Commons Attribution 4.0 International (CC BY 4.0) License, and our experiments use a sampled and preprocessed version of the citation graph.
Specifically, due to the memory limitations of quantum simulation, we restrict the sizes of the two disjoint node sets to 5.
We also reduce the original 1,433 bag-of-words node features of Cora to 128 by selecting the most frequently activated features across all nodes.

\subsection{Hyperparameters}
For MaxCut and QAP, we adopt the train-validation-test split, partitioned into 384 for training, 64 for validation, and 64 for testing.
For BMP, we increase the sizes of the training, validation, and test sets by a factor of 5 compared with \citet{geng2024benchmarking}.
For QEL experiments, we use a learning rate of 0.001, batch size of 8, the number of shots of 4,096 for test, and train for up to 30 epochs with early stopping patience of 10.
Note that the adjoint method of PennyLane does not require sampling shots.
For penalization, in QAP, penalty values of 50 and 150 are employed for 16- and 25-variable instances, respectively.
For BM, we use a penalty value of 1.5 and batch size of 2.
For classical benchmarks, most hyperparameters are set to match those in \citep{geng2024benchmarking}.
We decrease the number of perturbed samples for LODL to 500 to avoid iteratively solving too many NP-hard MaxCut and QAP problems.
We use the Adam optimizer \cite{kingma2014adam}.
Although Theorem~\ref{thm:qel_diminishing_step_stationarity} establishes stationarity only for a diminishing-step-size SGD recursion, we follow \cite{geng2024benchmarking} and use the same Adam optimizer for QEL and the classical benchmark models to ensure a consistent empirical comparison.

\section{Problem Formulations}
\label{apd:problem}
\subsection{Undirected Weighted Maximum Cut Problem}
Given an undirected graph $G=(V,E)$ with edge weights $w_{ij}$ for each $(i,j) \in E$, the undirected weighted maximum cut problem seeks a partition of the vertices into two disjoint sets such that the total weight of edges crossing the partition is maximized. It is formulated as
\begin{align*}
    \max\quad&\sum_{(i,j)\in E}w_{ij}(x_i+x_j-2x_ix_j),\\ 
    \textrm{s.t.}\quad&x_i\in\{0,1\}\quad \forall i\in V.
\end{align*}
It is represented as the Ising Hamiltonian to be minimized by mapping $x_i\mapsto\frac{1}{2}(1-Z_i)$,
\begin{align*}
    \min\quad&\sum_{(i,j)\in E}-\frac{w_{ij}}{2}(1-Z_iZ_j),\\
    \textrm{s.t.}\quad&Z_i\in\{-1,1\}\quad \forall i\in V.
\end{align*}
We set weights $w_{ij}$ as uncertain coefficients to be estimated.

\subsection{Quadratic Assignment Problem}
We are given $n$ facilities and $n$ locations, along with the flow $f_{ij}$ between facilities $i$ and $j$ and the distance $d_{kl}$ between locations $k$ and $l$.
Our goal is to find an assignment of facilities to locations that minimizes the cost defined by the flow and the distance. It is formulated as
\begin{align*}
    \min\quad&\sum_{i,j=1}^n\sum_{k,l=1}^nf_{ij}d_{kl}x_{ik}x_{jl},\\
    \textrm{s.t.}\quad&\sum_{i=1}^nx_{ik}=1\quad\forall k\in\{1,\cdots,n\} \\
                      &\sum_{k=1}^nx_{ik}=1\quad\forall i\in\{1,\cdots,n\} \\
    &x_{ik}\in\{0,1\}\quad \forall i,k\in\{1,\cdots,n\}.
\end{align*}
To transform it into the terms of the Ising Hamiltonian, we penalize equality constraints with a fixed penalty value $P$,
\begin{align*}
    \min\quad&\sum_{i,j=1}^n\sum_{k,l=1}^n\frac{f_{ij}d_{kl}}{4}(1-Z_{ik})(1-Z_{jl})\\&+P\sum_{k=1}^n\bigl(1-\sum_{i=1}^n\frac{1}{2}(1-Z_{ik})\bigr)^2+P\sum_{i=1}^n\bigl(1-\sum_{k=1}^n\frac{1}{2}(1-Z_{ik})\bigr)^2,\\
    \textrm{s.t.}\quad&Z_{ik}\in\{-1,1\}\quad \forall i,k\in\{1,\cdots,n\}.
\end{align*}
We set flows $f_{ij}$ as uncertain coefficients to be estimated.

\subsection{Bipartite Matching Problem}
Given two disjoint node sets $U$ and $W$ with $|U|=|W|=n$, and edge profits $p_{ij}$ for each $(i,j)\in U\times W$, the bipartite matching problem seeks a perfect matching that maximizes the total profit. It is formulated as
\begin{align*}
    \max\quad&\sum_{i=1}^n\sum_{j=1}^np_{ij}x_{ij},\\
    \textrm{s.t.}\quad&\sum_{j=1}^nx_{ij}=1\quad\forall i\in\{1,\cdots,n\}, \\
                      &\sum_{i=1}^nx_{ij}=1\quad\forall j\in\{1,\cdots,n\}, \\
                      &x_{ij}\in\{0,1\}\quad \forall i,j\in\{1,\cdots,n\}.
\end{align*}
To transform it into the terms of the Ising Hamiltonian, we penalize equality constraints with a fixed penalty value $P$,
\begin{align*}
    \min\quad&-\sum_{i=1}^n\sum_{j=1}^n\frac{p_{ij}}{2}(1-Z_{ij})\\&+P\sum_{i=1}^n\bigl(1-\sum_{j=1}^n\frac{1}{2}(1-Z_{ij})\bigr)^2+P\sum_{j=1}^n\bigl(1-\sum_{i=1}^n\frac{1}{2}(1-Z_{ij})\bigr)^2,\\
    \textrm{s.t.}\quad&Z_{ij}\in\{-1,1\}\quad \forall i,j\in\{1,\cdots,n\}.
\end{align*}
We set edge profits $p_{ij}$ as uncertain coefficients to be estimated.

\section{Additional Experimental Results}
\label{apd:results}
Table~\ref{tab:confidence-intervals} reports the 95\% confidence intervals for the relative regrets in Table~\ref{tab:result}.
\begin{table}[H]
\centering
\caption{95\% confidence intervals of each relative regret in Table~\ref{tab:result}.}
\label{tab:confidence-intervals}
\begin{adjustbox}{max width=\textwidth}
\begin{tabular}{cclrrrrr}
\hline
                         &          &  & \multicolumn{5}{c}{Regret (\%)}                                                                                                 \\ \cline{4-8} 
                         &          &  & \multicolumn{2}{c}{MaxCut}                        & \multicolumn{2}{c}{QAP}                           & \multicolumn{1}{c}{BMP} \\ \cline{4-8} 
                         &          &  & \multicolumn{1}{c}{16V} & \multicolumn{1}{c}{25V} & \multicolumn{1}{c}{16V} & \multicolumn{1}{c}{25V} & \multicolumn{1}{c}{25V} \\ \hline
LODL                     & $(2,32)$ &  & (8.50, 9.98)            & (8.31, 9.45)            & (3.25, 4.83)            & (4.35, 5.74)            & (80.00, 80.00)          \\ \hline
List-LTR                 & $(2,32)$ &  & (8.65, 10.03)           & (7.84, 8.77)            & (4.01, 5.95)            & (4.65, 6.30)            & (77.19, 97.21)          \\ \hline
\multirow{2}{*}{QEL-Lin} & $p=3$    &  & (4.96, 6.16)            & (4.29, 5.09)            & (3.72, 5.42)            & (5.64, 6.95)            & (65.44, 94.56)          \\
                         & $p=4$    &  & (4.85, 6.14)            & (4.53, 5.41)            & (4.67, 6.72)            & (5.08, 6.43)            & (65.44, 94.56)          \\ \hline
\multirow{2}{*}{QEL-Log} & $p=3$    &  & (4.55, 5.79)            & (4.05, 4.98)            & (3.69, 5.40)            & (4.50, 5.86)            & (59.59, 90.41)          \\
                         & $p=4$    &  & (4.51, 5.62)            & (3.81, 4.61)            & (3.67, 5.26)            & (4.84, 6.35)            & (42.17, 77.83)          \\ \hline
\end{tabular}
\end{adjustbox}
\end{table}

\newpage

\end{document}